\def\blfootnote{\gdef\@thefnmark{}\@footnotetext}
\theoremstyle{plain}
\newtheorem{theorem}{Theorem}[section]
\newtheorem{proposition}[theorem]{Proposition}
\newtheorem{lemma}[theorem]{Lemma}
\theoremstyle{definition}
\newtheorem{definition}[theorem]{Definition}
\newtheorem{assumption}[theorem]{Assumption}
\theoremstyle{remark}
\newcommand{\supp}{\mathrm{supp}}
\newcommand{\cut}{\mathrm{cut}}
\newcommand{\vol}{\mathrm{vol}}
\newcommand{\algrule}[1][.4pt]{\par\vskip.2\baselineskip\hrule height #1\par\vskip.2\baselineskip}
\title{Weighted Flow Diffusion for Local Graph Clustering with Node Attributes: an Algorithm and Statistical Guarantees}
\author{
    Shenghao Yang$^{*}$\qquad\qquad
    Kimon Fountoulakis$^{*}$
}
\date{}
\begin{document}
\maketitle
\def\thefootnote{*}
\footnotetext{David R. Cheriton School of Computer Science, University of Waterloo, Waterloo, Ontario, Canada.}
\def\thefootnote{\arabic{footnote}}
\blfootnote{Emails: \href{mailto:shenghao.yang@uwaterloo.ca}{shenghao.yang@uwaterloo.ca}, \href{mailto:kimon.fountoulakis@uwaterloo.ca}{kimon.fountoulakis@uwaterloo.ca}}

\begin{abstract}
Local graph clustering methods aim to detect small clusters in very large graphs without the need to process the whole graph. They are fundamental and scalable tools for a wide range of tasks such as local community detection, node ranking and node embedding. While prior work on local graph clustering mainly focuses on graphs without node attributes, modern real-world graph datasets typically come with node attributes that provide valuable additional information. We present a simple local graph clustering algorithm for graphs with node attributes, based on the idea of diffusing mass locally in the graph while accounting for both structural and attribute proximities. Using high-dimensional concentration results, we provide statistical guarantees on the performance of the algorithm for the recovery of a target cluster with a single seed node. We give conditions under which a target cluster generated from a fairly general contextual random graph model, which includes both the stochastic block model and the planted cluster model as special cases, can be fully recovered with bounded false positives. Empirically, we validate all theoretical claims using synthetic data, and we show that incorporating node attributes leads to superior local clustering performances using real-world graph datasets.
\end{abstract}

\section{Introduction}

Given a graph $G$ and a seed node in that graph, a local graph clustering algorithm finds a good small cluster that contains the seed node without looking at the whole graph~\cite{ACL06,SH13}. Because the graphs arising from modern applications are massive in size and yet are rich in small-scale local structures~\cite{LLDM09,Jeub15}, local graph clustering has become an important scalable tool for probing large-scale graph datasets with a wide range of applications in machine learning and data analytics~\cite{G15,FWY20,MS2021}. 

Traditional local graph clustering algorithms primarily focus on the structural properties of a graph dataset, i.e. nodes and edges, and consequently the analyses of these algorithms are often concerned with the combinatorial properties of the output cluster. For example, in most previous studies one is interested in the conductance of a cluster and defines a good cluster as one that has low conductance~\cite{ACL06,AP09,SH13,ALM13,AGPT2016,PKDJ17,WFHM2017,FWY20,LG20}. In this case, the objective of local graph clustering is thus detecting a low conductance cluster around the seed. With the increasing availability of multi-modal datasets, it is now very common for a graph dataset to contain additional sources of information such as node attributes, which may prove to be crucial for correctly identifying clusters with rather noisy edge connections. However, nearly all existing local graph clustering algorithms do not work with attributed graphs. Moreover, in the presence of node attributes, the objective and analysis of a local graph clustering algorithm should also adjust to take into account both sources of information (i.e. graph structure and attributes) as opposed to focusing solely on the combinatorial notion of conductance.

\subsection{Our contributions}

We propose a simple local graph clustering algorithm which simultaneously considers both graph structural and node attribute information. We analyze the performance of the proposed algorithm from a statistical perspective where we assume that the target cluster and the node attributes have been generated from a random data model. We provide conditions under which the algorithm is guaranteed to fully recover the target cluster with bounded false positives.

Our local graph clustering algorithm uses the recently proposed flow diffusion model on graphs~\cite{FWY20,CPW21}. The original flow diffusion was proposed to solve the local graph clustering problem on graphs without node attributes. In this work we consider flow diffusion on graphs whose edges are reweighted to reflect the proximity between node attributes. For simplicity, we focus on the widely used radial basis function kernel (i.e. the Gaussian kernel) for measuring the similarity between node attributes, while our algorithm and analysis may be easily extended to other metrics such as the Laplacian kernel, polynomial kernel and cosine similarity. A distinct characteristic of the proposed algorithm is its simplicity and flexibility. On one hand, the algorithm has few hyperparameters and thus it does not require much tuning; while on the other hand, it allows flexible initialization of source mass and sink capacities, which enables us to obtain different types of recovery guarantees.

Our main contribution is the analysis of the algorithm for the recovery of a target cluster with a single seed node. We provide high probability guarantees on the performance of the algorithm under a certain type of contextual random graph model. The data model we consider is fairly general. On the structural side, it only concerns the connectivity of nodes within the target cluster and their adjacent nodes, and hence it encompasses the stochastic block model (SBM) and the planted cluster model as special cases; on the node attribute side, it allows an attribute to be modelled by a sub-Gaussian random variable, and this includes Gaussian, uniform, Bernoulli, and any discrete or continuous random variables over a finite domain. Depending on a signal-to-noise ratio of the node attributes, we present two recovery results. Informally, if we have very good node attributes, then with overwhelming probability the algorithm fully recovers the target cluster with nearly zero false positives, irrespective of the interval connectivity of the target cluster (as long as it is connected); on the other hand, if we have good, but not too good, node attributes, then with overwhelming probability the algorithm fully recovers the target cluster, with the size of the false positives jointly controlled by both the combinatorial conductance of the target cluster and the signal-to-noise ratio of the node attributes.

Finally, we carry out experiments on synthetic data to verify all theoretical claims and on real-world data to demonstrate the advantage of incorporating node attributes.

\subsection{Previous work}

The local graph clustering problem is first introduced by \cite{SH13} and the authors proposed a random-walk based algorithm with early termination. Later \cite{ACL06} studied the same problem using approximate personalized PageRank vectors. There is a long line of work on local graph clustering where the analysis of the algorithm concerns the conductance of the output cluster~\cite{ACL06,AP09,SH13,ALM13,AGPT2016,PKDJ17,WFHM2017,FWY20,LG20}. The first statistical analysis of local graph clustering is considered by \cite{HFM2021} and the authors analyzed the average-case performance of the $\ell_1$-regularized PageRank~\cite{FKSCM2017} over a random data model. None of these works study local clustering in attributed graphs.

The idea to utilize both structural and node attribute information has been applied in the context of community detection, where the goal is to identify all clusters in a graph~\cite{YML13,jia2017node,zhe2019community,sun2020network}. These methods require processing the whole graph and hence are not suitable for local graph clustering.

Recently, contextual random graph models are been used in the literature for analyzing the performance of certain algorithms for attributed graphs. \cite{DSMM18,YS2021,BTB22,AFW22} studied algorithms for community detection in the contextual stochastic block model (CSBM). \cite{BFJ2021,FLYBJ22,BFJ23} analyzed the separability of nodes in the CSBM by functions that are representable by graph neural networks. \cite{wu2023an} characterized the effect of applying multiple graph convolutions on data generated from the CSBM. \cite{wei2022understanding,baranwal2023optimality} studied optimal node classifiers of the CSBM from Bayesian inference perspectives. The random model we consider in this work is more general and we are the first to consider statistical performance of a local graph clustering algorithm in contextual random models.

Finally, the problem of local graph clustering in attributed graphs is related to the statistical problem of anomaly detection~\cite{arias2008searching,arias2011detection,sharpnack2013near,qian2014efficient} and estimation~\cite{chitra2021quantifying}. Anomaly detection aims to decide whether or not there exists an anomalous cluster of nodes whose associated random variables follow a different distribution than those of the rest of the graph. It does not identify the anomalous cluster and hence it does not apply to local graph clustering. Anomaly estimation aims to locate the anomalous cluster and is more related to our setting. However, existing analyses for both anomaly detection and anomaly estimation are restricted to scalar-valued random variables, and the methods rely on computing test statistics or estimators which require processing the whole graph~\cite{qian2014efficient,chitra2021quantifying}.

\section{Weighted flow diffusion and local graph clustering with node attributes}\label{sec:formulation}

In this section, we start by providing an overview of flow diffusion on graphs, describing its physical interpretation as spreading mass in a graph along edges, and discussing some important algorithmic properties. 
%The formulation we present is slightly different from the original one in \cite{FWY20} as we allow different edge weights. The extension is trivial and all the important properties of flow diffusion carry out to the weighted case in a straightforward manner. 
Then, we present an algorithm that uses edge-weighted flow diffusion for local graph clustering with node attributes.

\subsection{Notations and basic properties of flow diffusion}

We consider undirected, connected and weighted graph $G = (V,E,w)$, where $V = \{1,2,\ldots,n\}$ is a set of nodes, $E \subseteq V \times V$ is a set of edges, and $w : E \rightarrow \mathbb{R}_+$ assigns each edge $(i,j) \in E$ with a positive weight. For simplicity we focus on undirected graphs in our discussion, although our algorithm and results extend to the strongly connected directed case in a straightforward manner. With a slight abuse of notation, for an edge $(i,j) \in E$ we write $w_{ij} = w_{ji} = w((i,j))$, and therefore $w$ is treated equivalently as a vector $w \in \mathbb{R}^m$ where $m = |E|$. Let $W \in \mathbb{R}^{m \times m}$ be a diagonal matrix of edge weights, i.e., its diagonal entry which corresponds to an edge $(i,j)$ is given by $W_{(i,j),(i,j)} = w_{ij}$. For example, if $W=I$ then $G$ reduces to an unweighted graph. We write $i \sim j$ if $(i,j) \in E$ and denote $A \in \{0,1\}^{n \times n}$ as the combinatorial adjacency matrix, i.e., $A_{ij} = 1$ if $i \sim j$ and 0 otherwise. The combinatorial degree $\deg_G(i)$ of a node $i \in V$ is the number of edges incident to it. For a subset $C \subseteq V$, the volume of $C$ is given by $\vol_G(C) = \sum_{i \in C}\deg_G(i)$. We use subscripts to indicate the graph we are working with, and we omit them when the graph is clear from context. We denote $B \in \mathbb{R}^{m \times n}$ as the combinatorial signed incidence matrix under an arbitrary orientation of the graph, where the row that corresponds to the oriented edge $(i,j)$ has two nonzero entries, with $-1$ at column $i$ and $1$ at column $j$. The support of a vector $x$ is $\supp(x) = \{i: x_i \neq 0\}$. We use standard notations $O_n,\Omega_n,\Theta_n,o_n,\omega_n$ for asymptotic behaviors of a function with respect to $n$, and we omit the subscript when it is clear from the context.

Given a source vector $\Delta \in \mathbb{R}^n$ and a sink capacity vector $T \in \mathbb{R}^n$, a flow diffusion in $G$ can be formulated as the following optimization problem:
\begin{equation}\label{eq:primal}
    \min_f \frac{1}{2}f^TWf \quad \mbox{s.t.} \; \Delta + B^TWf \le T,
\end{equation}
where $W$ is restricted to be the identity matrix in the original formulation~\cite{FWY20}. The flow variables $f \in \mathbb{R}^m$ determine the amount of mass that moves between nodes $i$ and $j$ for every edge $(i,j) \in E$. More precisely, $w_{ij}f_{ij}$ specifies the amount of mass that travels along $(i,j)$. We abuse the notation and use $f_{ij} = -f_{ji}$ for an edge $(i,j)$, so $w_{ij}f_{ij}$ is the amount of mass that moves from node $i$ to node $j$. In a flow diffusion, we assign $\Delta_i$ source mass to node $i$ and enforce a constraint that node $i$ can hold up to $T_i$ mass. Because one may always scale $\Delta$ and $T$ by the same constant, we assume without loss of generality that $T_i \ge 1$ for all $i$. If $T_i > \Delta_i$ at some node $i$, then we need to spread the source mass along edges in the graph to satisfy the capacity constraint. The vector $\Delta+B^TWf$ measures the final mass at each node if we spread the mass according to $f$. Therefore, the goal of the flow diffusion problem~\eqref{eq:primal} is to find a feasible way to spread the mass while minimizing the cost of flow $f^TWf$. In this work we allow different edge weights as long as they are positive, i.e., $W$ consists of positive diagonal entries. In the context of flow diffusion, edge weights define the efficiencies at which mass can spread over edges. To see this, simply note that $w_{ij}f_{ij}$ determines the amount of mass that moves along the edge $(i,j)$, and thus for fixed $f_{ij}$, the higher $w_{ij}$ is the more mass we can move along $(i,j)$.

For local graph clustering, it is usually more convenient to consider the dual problem of \eqref{eq:primal}:
\begin{equation}\label{eq:dual}
    \min_{x\ge0} \frac{1}{2}x^TLx + x^T(T - \Delta)
\end{equation}
where $L = B^TWB$ is the weighted Laplacian matrix of $G$. Throughout this work we use $f^*$ and $x^*$ to denote the optimal solutions of \eqref{eq:primal} and \eqref{eq:dual}, respectively. The solution $x^* \in \mathbb{R}^n_+$ embeds the nodes on the nonnegative real line. For local graph clustering without node attributes, \cite{FWY20} applied a sweep-cut rounding procedure to $x^*$ and derived a combinatorial guarantee in terms of the conductance of a cluster. In this work, with the presence of node attributes which may come from some unknown distributions, we take a natural statistical perspective and show how $\supp(x^*)$ recovers a target cluster generated from a contextual random graph model.

\begin{algorithm}[tb]
\caption{Flow diffusion (algorithmic form)}
\label{alg:opt}
  \begin{algorithmic}
    \STATE \hspace{-3mm}{\bfseries Input:} graph $G$, source $\Delta$ and sink $T$
    \begin{enumerate}[leftmargin=.2cm,noitemsep,nolistsep]
      \item Initialize $x_i  = 0$ and $m_i = \Delta_i$ for all $i \in V$.
      \item For $t = 1,2,\ldots$ do
      \begin{enumerate}[label=(\alph*),leftmargin=.5cm,noitemsep,nolistsep]
        \item Pick $i \in \{j : m_j > T_j\}$ uniformly at random.
        \item Apply $\texttt{push}(i)$.
      \end{enumerate}
      \item Return $x$.
    \end{enumerate}
  \end{algorithmic}
% push(i)
  \begin{algorithmic}
    \algrule
    \STATE \hspace{-3mm}$\texttt{push}(i)$:
    \algrule
    \STATE \hspace{-3mm}Make the following updates:
    \begin{enumerate}[leftmargin=.2cm,noitemsep,nolistsep]
      \item $x_i \gets x_i + (m_i-T_i)/w_i$ where $w_i = \sum_{j\sim i}w_{ij}$.
      \item $m_i \gets T_i$.
      \item For each node $j \sim i$: $m_j \gets m_j+ (m_i-T_i)w_{ij}/w_i$.
    \end{enumerate}
  \end{algorithmic}
\end{algorithm}

In order to compute the solution to \eqref{eq:dual} one may extend the iterative coordinate method used by \cite{FWY20} to work with weighted edges. We layout the algorithmic steps in Algorithm~\ref{alg:opt}, where we describe each coordinate-wise gradient update (i.e., $\texttt{push}(i)$) using its combinatorial interpretation as spreading mass from a node to its neighbors. In Algorithm~\ref{alg:opt}, $m_i$ represents the current mass at node $i$. At every iteration, we pick a node $i$ whose current mass $m_i$ exceeds its capacity $T_i$, and we remove the excess amount $m_i-T_i$ by sending it to the neighbors. Algorithm~\ref{alg:opt} may be viewed as an equivalent algorithmic form of flow diffusion since the iterates converge to $x^*$~\cite{FWY20}. An important property of Algorithm~\ref{alg:opt} is that it updates $x_i$ only if $x^*_i > 0$, and it updates $m_j$ only if $j \sim i$ for some $i$ such that $x^*_i > 0$. This means that the algorithm will not explore the whole graph if $x^*$ is sparse, which is usually the case in applications such local clustering and node ranking. We state this {\em locality} property in Proposition~\ref{prop:local} and provide a running time bound in Proposition~\ref{prop:runtime}. Both propositions can be proved by simply including edge weights in the original arguments of \cite{FWY20} and using our assumption that $T_i \ge 1$ for all $i$.

\begin{proposition}[\cite{FWY20}]\label{prop:local}
Let $x^t$ for $t \ge 1$ be iterates generated by Algorithm~\ref{alg:opt}, then $\supp(x^t) \subseteq \supp(x^*)$. Moreover, $|\supp(x^*)| \le \|\Delta\|_1$.
\end{proposition}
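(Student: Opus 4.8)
The plan is to read both claims off the coordinate-descent interpretation of Algorithm~\ref{alg:opt} applied to the dual \eqref{eq:dual}, carrying the edge weights through the argument of \cite{FWY20}. First I would record the bookkeeping identity: if $x$ denotes the vector of accumulated push amounts during the algorithm, then the current mass at node $j$ equals $m_j = \Delta_j - (Lx)_j$ with $L = B^TWB$. This holds at initialization (there $x = 0$ and $m = \Delta$) and is preserved by every $\texttt{push}(i)$, because that step decreases $m_i$ by $m_i - T_i$ and increases each $m_j$ with $j \sim i$ by $w_{ij}(m_i - T_i)/w_i$, which is exactly the change in $-(Lx)_j$ produced by raising $x_i$ by $(m_i - T_i)/w_i$ (using $w_i = \sum_{j\sim i} w_{ij}$). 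I would also write down the KKT conditions for the convex program \eqref{eq:dual}: $x^* \ge 0$, $m^* := \Delta - Lx^* \le T$ coordinatewise, and $x_i^*(T_i - m_i^*) = 0$ for all $i$; in words, every node carrying positive $x^*$-mass is saturated, $m_i^* = T_i$.

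For the containment $\supp(x^t) \subseteq \supp(x^*)$, I would prove the stronger invariant $0 \le x^t \le x^*$ by induction on $t$. The base case $x^0 = 0$ is immediate. For the step, suppose $0 \le x^t \le x^*$ and we push from a node $i$ with $m_i^t > T_i$; only coordinate $i$ changes and it increases, so nonnegativity and the upper bound on all other coordinates persist, and it remains to check $x_i^t + (m_i^t - T_i)/w_i \le x_i^*$. Using the identity above together with the KKT inequality $m_i^* \le T_i$,
\[
m_i^t - T_i \;\le\; m_i^t - m_i^* \;=\; (Lx^*)_i - (Lx^t)_i \;=\; w_i\,(x_i^* - x_i^t)\; -\; \sum_{j \sim i} w_{ij}\,(x_j^* - x_j^t),
\]
and since $x_j^* - x_j^t \ge 0$ for every neighbor $j$ by the induction hypothesis, the subtracted sum is nonnegative, giving $m_i^t - T_i \le w_i(x_i^* - x_i^t)$. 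Dividing by $w_i > 0$ yields the desired bound and closes the induction, so $\supp(x^t) \subseteq \supp(x^*)$ for all $t \ge 1$.

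For the size bound $|\supp(x^*)| \le \|\Delta\|_1$, I would combine mass conservation with the saturation condition and the standing assumption $T_i \ge 1$. Summing $m^* = \Delta - Lx^*$ over all nodes and using $\mathbf{1}^T L = 0$ gives $\sum_i m_i^* = \sum_i \Delta_i = \|\Delta\|_1$ (recall the source vector is nonnegative). The masses remain nonnegative throughout the diffusion — each push resets $m_i$ to $T_i \ge 1 > 0$ and only adds nonnegative amounts to neighbors — so $m^* \ge 0$ and therefore $\|\Delta\|_1 = \sum_i m_i^* \ge \sum_{i \in \supp(x^*)} m_i^*$. By complementary slackness $m_i^* = T_i \ge 1$ for each $i \in \supp(x^*)$, so the right-hand side is at least $|\supp(x^*)|$, which is the claim.

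I do not expect a serious obstacle; the one step that needs care is the inductive inequality in the second paragraph, where one must place the weights $w_{ij}$ and the normalization $w_i = \sum_{j\sim i} w_{ij}$ exactly so that the neighbor terms appear with the correct (subtractive) sign and can be dropped — this is the only place the generalization from $W = I$ to general positive $W$ is not purely cosmetic. Everything else is a direct transcription of the unweighted argument of \cite{FWY20} with $W$ inserted, together with the observation that $T_i \ge 1$ turns the saturated mass at each support node into one unit of total-mass ``budget''.
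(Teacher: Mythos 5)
Your proof is correct and is exactly the reconstruction the paper intends: the paper does not give its own argument but cites~\cite{FWY20} and remarks that both propositions follow ``by simply including edge weights in the original arguments,'' which is precisely your inductive invariant $0\le x^t\le x^*$ via the identity $m=\Delta-Lx$ and the KKT conditions, followed by mass conservation and saturation. One small polish: you obtain $m^*\ge0$ by appealing to convergence of the algorithmic masses, but it also follows directly from the KKT conditions without any limiting argument, since for $j\in\supp(x^*)$ complementary slackness gives $m^*_j=T_j\ge1$, while for $j\notin\supp(x^*)$ one has $m^*_j=\Delta_j-(Lx^*)_j=\Delta_j+\sum_{k\sim j}w_{jk}x^*_k\ge0$ using $\Delta\ge0$ and $x^*\ge0$.
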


\begin{proposition}[\cite{FWY20}]\label{prop:runtime}
Assuming $|\supp(x^*)| < n$, then after $\tau = O(\|\Delta\|_1\frac{\alpha}{\beta}\log \frac{1}{\epsilon})$ iterations, where $\alpha = \max_{i \in \supp(x^*)} w_i$, $w_i = \sum_{j \sim i}w_{ij}$, and $\beta \ge \min_{(i,j) \in \supp(Bx^*)}w_{ij}$, one has $\mathbb{E}[F(x^{\tau})] - F(x^*) \le \epsilon$, where $F$ denotes the objective function of \eqref{eq:dual}.
\end{proposition}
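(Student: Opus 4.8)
The plan is to recognize Algorithm~\ref{alg:opt} as randomized coordinate descent on the convex quadratic~\eqref{eq:dual} and to quantify its geometric convergence, following the template of~\cite{FWY20} with the identity weight matrix replaced by $W$. First I would check that the invariant $m = \Delta - Lx$ is maintained by $\texttt{push}$, so that $\nabla_i F(x) = (Lx)_i + T_i - \Delta_i = T_i - m_i$ and $L_{ii} = w_i$; hence a single $\texttt{push}(i)$ performs the update $x_i \gets x_i - \nabla_i F(x)/L_{ii} = x_i + (m_i - T_i)/w_i$, which is exactly one (unconstrained) coordinate‑minimization step on $F$. Since the algorithm only ever selects $i$ with $m_i > T_i$, i.e.\ $\nabla_i F(x) < 0$, the step stays inside $x \ge 0$, and the standard coordinate‑descent identity gives a per‑step objective decrease of exactly $(m_i - T_i)^2/(2 w_i)$.

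The structural heart of the argument is the monotonicity lemma $0 = x^0 \le x^1 \le \cdots \le x^*$, proved by induction on $t$. In the inductive step, suppose $x^t \le x^*$ and the algorithm picks $i$ with $m_i^t > T_i$. Using the KKT conditions for~\eqref{eq:dual} ($\nabla F(x^*) \ge 0$ and $x^*_i \nabla_i F(x^*) = 0$), one first shows $x^*_i > 0$ — otherwise $x^t_i = 0$ and $(Lx^t)_i \ge (Lx^*)_i \ge \Delta_i - T_i$ would give $m^t_i \le T_i$, a contradiction — and then, again from $(Lx^*)_i \ge \Delta_i - T_i$ together with $x^t_j \le x^*_j$ for $j \sim i$, one gets $(m^t_i - T_i)/w_i \le x^*_i - x^t_i$, so $x^{t+1} \le x^*$. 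This lemma yields two facts I would exploit: (a) every node ever pushed lies in $\supp(x^*)$, and since $\sum_i m^t_i = \|\Delta\|_1$ is conserved and $T_i \ge 1$, the active set $S^t := \{i : m^t_i > T_i\}$ has $k_t := |S^t| \le |\supp(x^*)| \le \|\Delta\|_1$; (b) on $\supp(x^*)$ complementary slackness forces $\nabla F(x^*) = 0$, so throughout the run $F(x^t) - F(x^*) = \frac{1}{2}\|x^t - x^*\|_L^2$ \emph{exactly}, and $m^t_i - T_i = (L(x^* - x^t))_i$ for each active $i$.

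Next I would establish the geometric contraction. Writing $\phi^t = F(x^t) - F(x^*)$ and $y = x^* - x^t \ge 0$ (supported on $\supp(x^*)$), the fact (b) above turns the expected one‑step progress into
\begin{equation*}
\mathbb{E}[\phi^t - \phi^{t+1} \mid x^t] = \frac{1}{k_t}\sum_{i \in S^t}\frac{(m^t_i - T_i)^2}{2 w_i} \ge \frac{1}{2\alpha k_t}\sum_{i \in S^t}\bigl((Ly)_i\bigr)^2 ,
\end{equation*}
so it suffices to prove a Poincaré/error‑bound inequality of the form $\sum_{i \in S^t}((Ly)_i)^2 \ge c\,\beta\, y^{T}Ly$ for a universal constant $c$, where $S^t = \{i : (Ly)_i > 0\}$; here one uses that $y \ge 0$, that $S^t \subseteq \supp(x^*)$ has at most $\|\Delta\|_1$ elements, and that the Dirichlet energy $y^{T}Ly = \sum_{(i,j)\in E} w_{ij}(y_i - y_j)^2$ only charges edges incident to $\supp(x^*)$, whose weights enter through $\beta$. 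Combining this with $k_t \le \|\Delta\|_1$ and $2\phi^t = y^{T}Ly$ gives $\mathbb{E}[\phi^{t+1} \mid x^t] \le \bigl(1 - \Omega(\beta/(\alpha\|\Delta\|_1))\bigr)\phi^t$; unrolling over $\tau$ iterations yields $\mathbb{E}[\phi^\tau] \le (1 - \Omega(\beta/(\alpha\|\Delta\|_1)))^\tau \phi^0$, and since the initial gap $\phi^0 = -F(x^*)$ is bounded by a fixed polynomial in the problem data (equivalently, one may assume $\epsilon \le \phi^0$ without loss of generality, so that $\log(\phi^0/\epsilon) = O(\log(1/\epsilon))$), we conclude $\mathbb{E}[F(x^\tau)] - F(x^*) \le \epsilon$ after $\tau = O(\|\Delta\|_1 \frac{\alpha}{\beta}\log\frac{1}{\epsilon})$ iterations.

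I expect the main obstacle to be the Poincaré‑type inequality with precisely this dependence: it is what produces the contraction factor, and one must make sure $\beta$ appears linearly (and $\|\Delta\|_1$ with no worse than a linear power), rather than picking up spurious factors from a crude Cauchy–Schwarz step or from the diameter of $\supp(x^*)$. The monotonicity lemma is the enabler — it pins down the support, keeps the active coordinates inside $\supp(x^*)$, and collapses the optimality gap to the clean quadratic form $\frac{1}{2}\|x^t - x^*\|_L^2$ — but extracting the correct constant requires the amortized active‑set bookkeeping of~\cite{FWY20}, carried out here with $W$ in place of $I$, so that $\alpha = \max_{i \in \supp(x^*)} w_i$ replaces the maximum combinatorial degree and $\beta$ replaces the trivial unit lower bound on edge weights.
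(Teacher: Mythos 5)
The paper itself does not prove this proposition; it cites \cite{FWY20} and notes that the argument carries over by ``simply including edge weights.'' Your reconstruction gets the scaffolding right — identifying \texttt{push} as exact coordinate minimization with $L_{ii}=w_i$, establishing the monotonicity $0\le x^t\le x^*$, using complementary slackness on $\supp(x^*)$ to reduce the gap to $\phi^t=\tfrac12\|x^*-x^t\|_L^2$, and noting $k_t\le\|\Delta\|_1$. This matches the structure of the argument in \cite{FWY20}. However, the step you flagged as ``the main obstacle'' is in fact broken as stated: the Poincaré-type inequality
\[
\sum_{i\in S^t}\bigl((Ly)_i\bigr)^2 \;\ge\; c\,\beta\, y^{T}Ly
\]
with a universal constant $c$ is \emph{false}. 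Take a path graph $1-2-\cdots-n$ with unit weights (so $\beta=1$) and $y_i=n+1-i$. Then $(Ly)_1=1$, $(Ly)_n=-1$, and $(Ly)_i=0$ otherwise, so $S^t=\{1\}$ and $\sum_{i\in S^t}(Ly)_i^2=1$, whereas $y^{T}Ly=n-1$. The inequality fails by a factor of $n-1$, and $y$ here is a perfectly legal residual (it is nonnegative, supported on $\supp(x^*)$, and arises from an $x^t$ satisfying all the algorithmic invariants).

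The reason the proposition is nonetheless true is that this slack is exactly absorbed by $\|\Delta\|_1$: in the path example, because each node in $\supp(x^*)$ is saturated with $T_i\ge1$ units of mass, $\|\Delta\|_1\ge|\supp(x^*)|\gtrsim n$, which is precisely the factor by which your inequality is off. This means you cannot use $k_t\le\|\Delta\|_1$ \emph{and} a $\|\Delta\|_1$-free Poincaré bound as two independent ingredients — doing so either fails (as above, when $k_t$ is small but $y$ has large Dirichlet energy) or double-counts the $\|\Delta\|_1$ factor and gives a worse rate. The missing lemma from \cite{FWY20} is a gradient-to-gap inequality in which $\|\Delta\|_1$ appears directly, roughly of the form
\[
\sum_{i\in S^t}\frac{(Ly)_i^2}{w_i}\;\ge\;\frac{c\,\beta\,k_t}{\alpha\,\|\Delta\|_1}\,y^{T}Ly,
\]
so that the $k_t$ in the expected-progress denominator cancels rather than being bounded separately. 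Establishing this requires the flow/transport decomposition of $By$ into paths terminating outside $\supp(x^*)$, whose lengths are controlled by $|\supp(x^*)|\le\|\Delta\|_1$ (this is also where $T_i\ge1$ enters), combined with the edge-weight lower bound $\beta$ on the relevant edges. Without that amortized bookkeeping, the contraction factor you derive cannot be made correct as a per-step statement over all valid $x^t$.
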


Let $\bar{d}$ denote the maximum degree of a node in $\supp(x^*)$. Since each iteration of Algorithm~\ref{alg:opt} only touches a node $i \in \supp(x^*)$ and its neighbors, Proposition~\ref{prop:local} implies that the total number of nodes (except their neighbors $j$ such that $x^*_j = 0$) that Algorithm~\ref{alg:opt} will ever look at is upper bounded by the total amount of source mass $\|\Delta\|_1$. Therefore, if the source mass is small and $\bar{d}$ does not scale linearly with $n$, then Algorithm~\ref{alg:opt} would only explore locally in the graph, and the size of the subgraph which Algorithm~\ref{alg:opt} explores is controlled by $\|\Delta\|_1$. Proposition~\ref{prop:runtime} implies that the total running time of Algorithm~\ref{alg:opt} for computing an $\epsilon$-accurate solution is $O(\bar{d}\|\Delta\|_1\frac{\alpha}{\beta}\log\frac{1}{\epsilon})$. Therefore, if $\bar{d}, \|\Delta\|_1, \frac{\alpha}{\beta}$ are all sublinear in $n$, then Algorithm~\ref{alg:opt} takes sublinear time.

\subsection{Local clustering with node attributes}

In local graph clustering, we are given a seed node $s \in V$ and the goal is to identify a good cluster that contains the seed. Existing methods mostly focus on the setting where one only has access to the structural information, i.e. nodes and edges of the graph, and they do not take into account node attributes. However, it is reasonable to expect that informative node attributes should help improve the performance of a local clustering algorithm. For example, the original flow diffusion solves the local graph clustering problem by spreading source mass from the seed node to nearby nodes, and an output cluster is obtained based on where in the graph the mass diffuse to~\cite{FWY20}. In this case, node attributes may be used to guide the spread of mass so that more mass are trapped inside the ground-truth target cluster, and consequently, improve the accuracy of the algorithm.

The idea to guide the diffusion by using node attributes can be easily realized by relating edge weights to node attributes. Given a graph $G = (V,E,w)$ with a set of node attributes $X_i \in \mathbb{R}^d$ for $i \in V$, and given a seed node $s$ from an unknown target cluster $K$, the goal is to recover $K$. To do so, we construct a new graph $G' = (V,E,w')$ having the same structure but new edge weights $w_{ij}' =  w_{ij}\rho(X_i,X_j)$ where $\rho(X_i,X_j)$ measures the proximity between $X_i$ and $X_j$. In this case, for a flow diffusion in $G'$, if two adjacent nodes $i$ and $j$ have similar attributes, then it is easier to send a lot of mass along the edge $(i,j)$. In particular, when one removes the excess mass from a node $i$ by sending it to the neighbors, the amount of mass that a neighbor $j$ receives is proportional to $w'_{ij}$ (cf. Step~3 of $\texttt{push}(i)$ in Algorithm~\ref{alg:opt}), and hence more mass will be sent to a neighbor whose attributes also bear close proximity. Therefore, if nodes within the target cluster $K$ share similar attributes, then a flow diffusion in $G'$, which starts from a seed node $s\in K$, would naturally force more mass to spread within $K$ than a flow diffusion in the original graph $G$.

\begin{algorithm}[tb]
\caption{Local graph clustering with node attributes}
\label{alg:lgc}
  \begin{algorithmic}
    \STATE \hspace{-3mm}{\bfseries Input:} graph $G=(V,E,w)$, node attributes $X_i$ for all $i \in V$, seed node $s\in V$, hyperparameter $\gamma \ge 0$.
    \STATE \hspace{-3mm}{\bfseries Output:} a cluster $C \subseteq V$
    \begin{enumerate}[leftmargin=.2cm,noitemsep,nolistsep]
      \item Define reweighted graph $G'=(V,E,w')$ whose edge weights are given by $w'_{ij} = w_{ij}\exp(-\gamma\|X_i-X_j\|_2^2)$.
      \item Set source mass $\Delta_s > 0$ and $\Delta_i = 0$ for $i \neq s$. Set sink capacity $T_i$.
      \item Run flow diffusion (Algorithm~\ref{alg:opt}) with input $G',\Delta,T$ and obtain output $x^{\tau}$.
      \item Return $\supp(x^{\tau})$
    \end{enumerate}
  \end{algorithmic}
\end{algorithm}

In this work, we use the Gaussian kernel to measure the similarity between node attributes, that is, we consider $\rho(X_i, X_j) = \exp(-\gamma \|X_i-X_j\|_2^2)$ where $\gamma\ge0$ is a hyperparameter. The Gaussian kernel is one of the most widely used metrics of similarity and has proved useful in many applications such as spectral clustering. In the next section we provide rigorous statistical guarantees on the performance of local graph clustering with node attributes by using the optimal solution of weighted flow diffusion~\eqref{eq:dual}, where edge weights are defined by the Gaussian kernel for an appropriately chosen $\gamma>0$. We focus on the Gaussian kernel for its simplicity. Both the algorithm in this section and the analysis in the next section can be easily extended to work with other metrics such as the Laplacian kernel, polynomial kernel and cosine similarity.

We summarize the local clustering procedure in Algorithm~\ref{alg:lgc}. As we show in the next section, suitable choices for the sink capacities $T$ include $T_i = 1$ or $T_i = \deg_G(i)$ for all $i$, and one may correspondingly set the source mass $\Delta_s = \alpha\sum_{i \in K}T_i$ for $\alpha > 1$ where $K$ is the target cluster. In practice, one does not need to know the exact value of $\sum_{i \in K}T_i$. As we demonstrate in Section~\ref{sec:experiments}, a rough estimate of the size of $K$ (e.g. $|K|$ or $\vol_G(K)$) within a constant multiplicative factor would already lead to a good local clustering performance. Finally, note that Algorithm~\ref{alg:lgc} can be implemented to maintain the locality nature of flow diffusion: Starting from the seed node, executing Algorithm~\ref{alg:lgc} requires the access to a new node, its sink capacity and attributes only if they become necessary for subsequent computations. For example, one should never compute an edge weight if that edge is not needed to diffuse mass.

\section{Statistical guarantees under contextual random graph model}

We assume that the node attributes and a target cluster are generated from the following random model. For simplicity in discussion we will assume that the random model generates unweighted graphs, although one may easily obtain identical results for weighted graphs whose edges weights do not scale with the number of nodes $n$.

\begin{definition}[Contextual local random model]\label{def:data_model}
Given a set of nodes $V$, let $K \subseteq V$ be a target cluster with cardinality $|K| = k$. For every pair of nodes $i$ and $j$, if $i,j \in K$ then we draw an edge $(i,j)$ with probability $p$; if $i \in K$ and $j \notin K$ then we draw an edge $(i,j)$ with probability $q$; otherwise, we allow any (deterministic or random) model to draw an edge. The node attributes $X_i$ for a node $i$ are given as $X_i = \mu_i + Z_i$, where $\mu_i \in \mathbb{R}^d$ is a fixed signal vector and $Z_i \in \mathbb{R}^d$ is a random noise vector whose $\ell$\textsuperscript{th} coordinate $Z_{i\ell}$ follows independent mean zero sub-Gaussian distribution with variance proxy $\sigma_{\ell}$, i.e., for any $t \ge 0$ we have $\mathbb{P}(|Z_{i\ell}| \ge t) \le 2\exp(-\frac{t^2}{2\sigma_{\ell}^2})$. Though not necessary, to simplify the discussion we require $\mu_i=\mu_j$ for $i,j \in K$ .
\end{definition}

This random model is fairly general. For example, if the edges that connect nodes in $V\backslash K$ have been generated from the SBM, $\mu_i = \mu_j$ for every $i,j$ that belong to the same block, and all $Z_i$'s  follow the same isotropic Gaussian distribution, then we obtain the CSBM which has been extensively used in the analyses of algorithms for attributed graphs~\cite{DSMM18,BFJ2021,YS2021}. On the other hand, if the edges that connect nodes in $V\backslash K$ have been generated from the Erd\H{o}s-Renyi model with probability $q$, $\mu_i = \mu_j \neq 0$ for $i,j \in K$ and $\mu_i = 0$ for $i \not\in K$, and all $Z_i$'s follow the same isotropic Gaussian distribution, then we obtain a natural coupling of the planted densest subgraph problem and the submatrix localization problem~\cite{CX2016}. In terms of modelling the noise of node attributes, sub-Gaussian distributions include Gaussian, Bernoulli, and any other continuous or discrete distribution over finite domains. Therefore the random model allows different types of coordinate-wise noise (and varying levels of noise controlled by $\sigma_{\ell}$) which could depend on the nature of the specific attribute. For example, the noise of a continuous attribute may be Gaussian or uniform, whereas the noise of a binary-encoded categorical attribute may be Bernoulli.

In order for node attributes to provide useful information, nodes inside $K$ should have distinguishable attributes compared to nodes not in $K$. Denote 
\begin{align*}
    \hat\mu := \min_{i \in K, j\not\in K}\|\mu_i - \mu_j\|_2, \quad \hat{\sigma} := \max_{1 \le \ell \le d}\sigma_{\ell}.
\end{align*}
We make Assumption~\ref{assum:mu_sigma} which states that the relative signal $\hat\mu$ dominates the maximum coordinate-wise noise $\hat\sigma$, and that the sum of normalized noises does not grow faster than $\log n$. The latter assumption is easily satisfied, e.g., when the dimension $d$ of node attributes does not scale with the number of nodes $n$. In practice, when the set of available or measurable attributes are fixed a priori, one always has $d = O_n(1)$. This is particularly relevant in the context of local clustering where it is desirable to have sublinear algorithms, since if $d = \Omega(n)$ then even computing a single edge weight $w_{ij}$ would take time at least linear in $n$.
 
\begin{assumption}\label{assum:mu_sigma}
$\hat\mu = \omega(\hat\sigma\sqrt{\lambda\log n})$ for some $\lambda = \Omega_n(1)$; $\sum_{\ell=1}^d \sigma_{\ell}^2/\hat\sigma^2 = O(\log n)$.
\end{assumption}

Before we move on to discuss how exactly node attributes help to recover $K$, we need to talk about the signal and noise from the graph structure. For a node $i \in K$, the expected number of neighbors in $K$ is $p(k-1)$, and the expected number of neighbors not in $K$ is $q(n-k)$. Since mass spread along edges, if there are too many edges connecting $K$ to $V\backslash K$, it may become difficult to prevent a lot of mass from spreading out of $K$. The consequence of having too much mass which start in $K$ to leak out of $K$ is that $\supp(x^*)$ may have little overlap with $K$, and consequently Algorithm~\ref{alg:lgc} would have poor performance. 

Fortunately, node attributes may be very helpful when the structural information is not strong enough, e.g., when $q(n-k) > p(k-1)$. As discussed earlier, informative node attributes should be able to guide the spread of mass in the graph. In a flow diffusion, where the mass get spread to from the source node depends on the edge weights. The higher weight an edge has, the easier to send mass along that edge. Therefore, in order to keep as much mass as possible inside the target cluster $K$, an ideal situation would be that edges inside $K$ have significantly more weights than an edge that connects $K$ to $V\backslash K$. It turns out that this is exactly the case when we have good node attributes. By applying concentration results on the sum of squares of sub-Gaussian random variables, Lemma~\ref{lem:edge_weight} says that, with overwhelming probability, one obtains a desirable separation of edge weights as a consequence of node attributes having more signal than noise (i.e. when Assumption~\ref{assum:mu_sigma} holds).

\begin{lemma}\label{lem:edge_weight}
Under Assumption~\ref{assum:mu_sigma}, one may pick $\gamma$ such that $\gamma\hat\sigma^2 = o(\log^{-1}n)$ and $\gamma\hat\mu^2 = \omega_n(\lambda)$. Consequently, with probability at least $1-o_n(1)$, the edge weight $w_{ij} = \exp(-\gamma \|X_i-X_j\|_2^2)$ satisfies $w_{ij} \ge 1-o_n(1)$ for all $i,j \in K$, and $w_{ij} \le \exp(-\omega_n(\lambda))$ for all $i \in K, j\not\in K$.
\end{lemma}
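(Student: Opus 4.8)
The plan is to control the quantity $\|X_i - X_j\|_2^2 = \|\mu_i - \mu_j + Z_i - Z_j\|_2^2$ from above when $i,j \in K$ and from below when $i \in K$, $j \notin K$, and then translate these bounds into bounds on $w_{ij} = \exp(-\gamma\|X_i-X_j\|_2^2)$ by monotonicity of the exponential. First I would record the choice of $\gamma$: since Assumption~\ref{assum:mu_sigma} gives $\hat\mu = \omega(\hat\sigma\sqrt{\lambda\log n})$, i.e. $\hat\mu^2/(\hat\sigma^2 \lambda\log n) \to \infty$, there is a window for $\gamma$ — for instance take $\gamma$ so that $\gamma\hat\sigma^2$ equals something like $1/(\log n \cdot h(n))$ for a slowly growing $h(n)$, chosen small enough that $\gamma\hat\sigma^2 = o(\log^{-1} n)$ while still $\gamma\hat\mu^2 = \gamma\hat\sigma^2 \cdot (\hat\mu^2/\hat\sigma^2) = \omega_n(\lambda)$; the slack between $\hat\mu^2/\hat\sigma^2$ and $\lambda\log n$ makes this possible. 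I would state this explicitly and verify the two displayed conditions on $\gamma$.

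The core estimate is a concentration bound on $\|Z_i - Z_j\|_2^2 = \sum_{\ell=1}^d (Z_{i\ell} - Z_{j\ell})^2$. Each $Z_{i\ell} - Z_{j\ell}$ is mean-zero sub-Gaussian with variance proxy $O(\sigma_\ell^2)$ (sum of two independent sub-Gaussians), hence $(Z_{i\ell}-Z_{j\ell})^2$ is sub-exponential with parameters controlled by $\sigma_\ell^2$. Applying a Bernstein-type tail bound for sums of independent sub-exponential random variables (e.g. Vershynin's form), for any $u>0$,
\[
\mathbb{P}\!\left(\Big|\,\|Z_i - Z_j\|_2^2 - \mathbb{E}\|Z_i-Z_j\|_2^2\,\Big| \ge u\right) \le 2\exp\!\left(-c\min\!\Big\{\tfrac{u^2}{\sum_\ell \sigma_\ell^4},\ \tfrac{u}{\max_\ell \sigma_\ell^2}\Big\}\right),
\]
and $\mathbb{E}\|Z_i-Z_j\|_2^2 = \sum_\ell \mathbb{E}(Z_{i\ell}-Z_{j\ell})^2 = O(\sum_\ell \sigma_\ell^2)$. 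I would pick $u$ a suitable multiple of $\sum_\ell \sigma_\ell^2$ (say $u = \tfrac12 \mathbb{E}\|Z_i-Z_j\|_2^2$ or a fixed constant times it) so that the bound reads: with probability at least $1 - 2\exp(-c'\sum_\ell \sigma_\ell^2/\hat\sigma^2)$ — and here Assumption~\ref{assum:mu_sigma} is used only to bound $\sum_\ell \sigma_\ell^2/\hat\sigma^2 = O(\log n)$ so this failure probability stays $n^{-\Theta(1)}$ or at least $o_n(1)$ after the union bound — we have $\|Z_i - Z_j\|_2^2 = O(\sum_\ell \sigma_\ell^2) = O(d\hat\sigma^2)$, and more precisely $\gamma\|Z_i-Z_j\|_2^2 = O(\gamma\hat\sigma^2 \sum_\ell \sigma_\ell^2/\hat\sigma^2) = O(\gamma\hat\sigma^2\log n) = o_n(1)$. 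Then a union bound over all $O(n^2)$ relevant pairs $(i,j)$ — or, exploiting the locality of the diffusion, over the much smaller collection of pairs that are actually touched — keeps the total failure probability $o_n(1)$, provided the per-pair failure probability is $\exp(-\omega(\log n))$; I should check that the constant in $\gamma\hat\sigma^2 = o(\log^{-1}n)$ can be taken small enough to guarantee this (this is the one place where the slack in the choice of $\gamma$ interacts with the union-bound budget, and I'd be careful there).

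With the noise controlled, the two cases are immediate. For $i,j \in K$ we have $\mu_i = \mu_j$ by the model, so $\|X_i-X_j\|_2^2 = \|Z_i - Z_j\|_2^2$ and $\gamma\|X_i-X_j\|_2^2 = o_n(1)$, whence $w_{ij} = \exp(-o_n(1)) \ge 1 - o_n(1)$. For $i \in K$, $j \notin K$, the triangle inequality gives $\|X_i - X_j\|_2 \ge \|\mu_i - \mu_j\|_2 - \|Z_i - Z_j\|_2 \ge \hat\mu - O(\hat\sigma\sqrt{d})$; since $\hat\mu = \omega(\hat\sigma\sqrt{\lambda\log n})$ dominates $\hat\sigma\sqrt d$ (using $\lambda = \Omega_n(1)$ and $d/\log n$ bounded via Assumption~\ref{assum:mu_sigma}), we get $\|X_i-X_j\|_2^2 \ge (1-o_n(1))\hat\mu^2$, hence $w_{ij} \le \exp(-(1-o_n(1))\gamma\hat\mu^2) = \exp(-\omega_n(\lambda))$ by the second property of $\gamma$. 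I expect the main obstacle to be purely bookkeeping: tracking the constants through the Bernstein bound so that the per-pair tail is small enough to survive the union bound while simultaneously respecting $\gamma\hat\sigma^2 = o(\log^{-1}n)$ — i.e. confirming that the "slowly growing" slack $h(n)$ between $\hat\mu^2/\hat\sigma^2$ and $\lambda\log n$ really is enough for both constraints at once. There is also a minor subtlety in going from $\sqrt d \,\hat\sigma \ll \hat\mu$: we only know $\sum_\ell \sigma_\ell^2 = O(\hat\sigma^2\log n)$, not $d = O(\log n)$, so the right bound to use for the noise term is $\|Z_i-Z_j\|_2 = O(\hat\sigma\sqrt{\log n})$ rather than $O(\hat\sigma\sqrt d)$, which is in fact what the concentration step delivers and what matches the $\sqrt{\lambda\log n}$ in the assumption.
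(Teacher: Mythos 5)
Your proof takes the same core route as the paper's for the concentration step: express $\|X_i-X_j\|_2^2$ in terms of the noise, observe that each $(Z_{i\ell}-Z_{j\ell})^2$ is sub-exponential with parameter $O(\sigma_\ell^2)$, apply a Bernstein-type bound, and union over pairs. The choice of $\gamma$ and the conclusion for $i,j\in K$ also match the paper.

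For the cross-cluster case $i\in K$, $j\notin K$, you deviate: you use the reverse triangle inequality $\|X_i-X_j\|_2 \ge \|\mu_i-\mu_j\|_2 - \|Z_i-Z_j\|_2$, then square, whereas the paper expands $\|X_i-X_j\|_2^2 = \|\mu_i-\mu_j\|_2^2 + \|Z_i-Z_j\|_2^2 + 2(\mu_i-\mu_j)^T(Z_i-Z_j)$ and bounds the cross term by a separate Hoeffding argument. Both deliver $\|X_i-X_j\|_2^2 \ge \|\mu_i-\mu_j\|_2^2(1-o_n(1))$; your route is a little more elementary since it reuses only the already-established bound $\|Z_i-Z_j\|_2 = O(\hat\sigma\sqrt{\log n})$ rather than introducing a second concentration inequality. (You also correctly catch that the noise bound must be $O(\hat\sigma\sqrt{\log n})$ rather than $O(\hat\sigma\sqrt d)$, which is what makes the triangle-inequality comparison to $\hat\mu = \omega(\hat\sigma\sqrt{\lambda\log n})$ go through.)

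There is, however, one genuine gap in the concentration step. You suggest taking the deviation threshold $u$ to be a fixed constant times $\sum_\ell\sigma_\ell^2$ (e.g.\ $u=\tfrac12\mathbb{E}\|Z_i-Z_j\|_2^2$), which yields per-pair failure probability $\exp\bigl(-\Theta(\sum_\ell\sigma_\ell^2/\hat\sigma^2)\bigr)$. But Assumption~\ref{assum:mu_sigma} only gives an \emph{upper} bound $\sum_\ell\sigma_\ell^2/\hat\sigma^2 = O(\log n)$; that ratio can be $O(1)$ (e.g.\ $d$ small), in which case $\exp\bigl(-\Theta(\sum_\ell\sigma_\ell^2/\hat\sigma^2)\bigr)$ is a constant and does not survive the union bound over $\Theta(n^2)$ pairs. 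The fix is to set $u=\Theta(\hat\sigma^2\log n)$: then $u^2/\sum_\ell\sigma_\ell^4 = \Omega(\log n)$ (using $\sum_\ell\sigma_\ell^4 \le \hat\sigma^2\sum_\ell\sigma_\ell^2 = O(\hat\sigma^4\log n)$) and $u/\hat\sigma^2 = \Theta(\log n)$, giving a per-pair failure probability $n^{-\Theta(1)}$ with a tunable constant, while still yielding the final estimate $\|Z_i-Z_j\|_2^2 \le O(\sum_\ell\sigma_\ell^2) + O(\hat\sigma^2\log n) = O(\hat\sigma^2\log n)$ that you rely on. Relatedly, your worry that ``the constant in $\gamma\hat\sigma^2=o(\log^{-1}n)$'' must interact with the union-bound budget is misplaced: $\gamma$ enters only after the concentration event is established and has no effect on the tail probability; the parameter that controls the union-bound budget is the Bernstein threshold $u$.
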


Not surprisingly, Lemma~\ref{lem:edge_weight} implies that the gap between edge weights is controlled by $\lambda$ which, according to Assumption~\ref{assum:mu_sigma}, measures how strong the attribute signal is. If $\lambda$ is sufficiently large, then naturally one would expect an algorithm that uses the node attributes to nearly perfectly recover $K$, irrespective of how noisy the graph is. Otherwise, the performance to recover $K$ would depend on a combination of both structural and attribute information. In what follows we present two recovery results which precisely correspond to these two scenarios. In all probability bounds, we keep explicit dependence on the cluster size $k$ because, for local graph clustering, $k$ may be a large constant and does not necessarily scale with $n$.

\begin{theorem}[Recovery with very good node attributes]\label{thm:recovery1}
Under Assumption~\ref{assum:mu_sigma}, for any $\gamma$ satisfying $\gamma\hat\sigma^2 = o(\log^{-1}n)$ and $\gamma\hat\mu^2 = \omega_n(\lambda)$, with source mass $\Delta_s = (1+\beta)\sum_{i \in K}T_i$ for any $\beta > 0$,
\begin{enumerate}[leftmargin=.5cm]
\item if $K$ is connected and $\lambda = \Omega_n(\log k + \log(1/\beta) + \log(q(n-k)) )$, then with probability at least $1-o_n(1)-k^{-1/3}$, for every seed node $s \in K$ we have $K \subseteq \supp(x^*)$ and $\sum_{i \in \supp(x^*)\backslash K}T_i \le \beta\sum_{i \in K}T_i$;
\item if $p \ge \frac{(4+\epsilon)}{\delta^2}\frac{\log k}{k-1}$ for some $0<\delta<1$ and $\epsilon > 0$, and $\lambda = \Omega_n(\log k + \log(1/\beta) + \log(\frac{q(n-k)}{p(k-1)}) + \log(\frac{1}{1-\delta}))$, then with probability at least $1-o_n(1)-k^{-1/3}-ek^{-\epsilon/2}$, for every seed node $s \in K$ we have $K \subseteq \supp(x^*)$ and $\sum_{i \in \supp(x^*)\backslash K}T_i \le \beta\sum_{i \in K}T_i$.
\end{enumerate}
In particular, we obtain the following bounds on false positives: if $T_i = 1$ for all $i \in V$ then 
\[
    |\supp(x^*)\backslash K| \le \beta|K|;
\]
if $T_i = \deg_G(i)$ for all $i \in V$ then
\[
    \vol_G(\supp(x^*)\backslash K) \le \beta \vol_G(K).
\]
\end{theorem}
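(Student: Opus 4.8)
The plan is to reduce the entire statement to the single claim $K\subseteq\supp(x^*)$, and then to prove that claim by a mass-conservation argument driven by the edge-weight separation of Lemma~\ref{lem:edge_weight}. Throughout, write $w_{ij}=\exp(-\gamma\|X_i-X_j\|_2^2)$ for the reweighted edge weights, $L=B^TWB$ for the associated Laplacian, and $\mathrm{fm}_i=(\Delta+B^TWf^*)_i=\Delta_i-(Lx^*)_i$ (using $f^*=-Bx^*$) for the final mass at node $i$. First observe that the rest of the theorem is formal once $K\subseteq\supp(x^*)$ is known: since $B\mathbf{1}=0$, summing the constraint of \eqref{eq:primal} gives $\sum_i\mathrm{fm}_i=\|\Delta\|_1=(1+\beta)\sum_{i\in K}T_i$; complementary slackness forces $\mathrm{fm}_i=T_i$ on $\supp(x^*)$ and $0\le\mathrm{fm}_i\le T_i$ elsewhere; hence $\sum_{i\in\supp(x^*)}T_i\le(1+\beta)\sum_{i\in K}T_i$, and when $K\subseteq\supp(x^*)$ this yields $\sum_{i\in\supp(x^*)\setminus K}T_i=\sum_{i\in\supp(x^*)}T_i-\sum_{i\in K}T_i\le\beta\sum_{i\in K}T_i$. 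Specialising to $T_i\equiv1$ and to $T_i=\deg_G(i)$ gives the two displayed bounds on $|\supp(x^*)\setminus K|$ and $\vol_G(\supp(x^*)\setminus K)$. So it remains to prove $K\subseteq\supp(x^*)$ with the stated probabilities and uniformly over the seed $s\in K$.

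For the probabilistic part I would intersect two families of ``good events''. The first is the event of Lemma~\ref{lem:edge_weight}: for any admissible $\gamma$, with probability $1-o_n(1)$ one has $w_{ij}\ge1-o_n(1)$ for all $i,j\in K$ and $w_{ij}\le\theta:=\exp(-\omega_n(\lambda))$ for all $i\in K,\ j\notin K$. The second concerns the random graph: a Chernoff/Bernstein bound on the $\le k(n-k)$ independent $\mathrm{Bernoulli}(q)$ indicators of the $K$-boundary edges, together with a union bound over the $k$ nodes of $K$, gives a high-probability upper bound on $\cut_G(K)$ and on $\max_{i\in K}\deg_\partial(i)$ (with $\deg_\partial(i)$ the number of neighbours of $i$ outside $K$); in part~2 one additionally applies a lower-tail Chernoff bound to the $\mathrm{Bin}(k-1,p)$ internal degrees, and the hypothesis $p\ge\frac{(4+\epsilon)}{\delta^2}\frac{\log k}{k-1}$ is exactly what makes the union bound over $K$ succeed (and, as a by-product, makes $G[K]$ connected). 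These graph events carry the $k^{-1/3}$ and $ek^{-\epsilon/2}$ failure terms, and none of them involves the seed $s$.

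On the good event I would establish $K\subseteq\supp(x^*)$ deterministically. Summing $(Lx^*)_i=\Delta_i-\mathrm{fm}_i$ over $i\in K$ and telescoping the purely-internal terms gives the exact leakage identity
\begin{equation*}
\|\Delta\|_1-\sum_{i\in K}\mathrm{fm}_i\;=\;\sum_{i\in K}\Big(\textstyle\sum_{j\notin K,\,j\sim i}w_{ij}\Big)x^*_i\;-\;\sum_{j\notin K}\Big(\textstyle\sum_{i\in K,\,i\sim j}w_{ij}\Big)x^*_j\;\le\;\theta\sum_{i\in K}\deg_\partial(i)\,x^*_i ,
\end{equation*}
i.e.\ the mass that ends up outside $K$ is at most $\theta\sum_{i\in K}\deg_\partial(i)x^*_i$. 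If some node of $K$ were unsaturated, then since the total final mass is $(1+\beta)\sum_{i\in K}T_i$ and each $T_i\ge1$, the mass outside $K$ would have to be at least $\beta\sum_{i\in K}T_i$; I aim to contradict this. The subtlety is that $x^*_i$ for $i\in K$ can itself be as large as $1/\theta$ — pushing mass out through a $\theta$-weight edge requires a comparably large potential gap — so it cannot simply be bounded by a polynomial; but this blow-up occurs only \emph{after} $K$ has been filled. Concretely I would run Algorithm~\ref{alg:opt} under the policy ``push an overloaded node of $K$ whenever one exists'' and compare it with the boundary-deleted diffusion confined to the connected graph $G'[K]$: while the accumulated leakage stays below $\beta\sum_{i\in K}T_i$ the two dynamics differ negligibly, and the confined dynamics saturates every node of $K$ with coordinate values staying $O(\|\Delta\|_1\,D_K)$, where $D_K$ is the effective-resistance diameter of $G'[K]$ — at most $(k-1)/(1-o_n(1))$ in case~1 and of order $1/((1-\delta)p(k-1))$ in case~2 by the min-degree bound. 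Feeding this into the leakage estimate, the leakage incurred before $K$ is filled is $\theta\cdot O(\max_{i\in K}\deg_\partial(i))\cdot O(k\,\|\Delta\|_1\,D_K)$, in which the factor $\|\Delta\|_1$ cancels against $\sum_{i\in K}T_i$; since $\theta=\exp(-\omega_n(\lambda))$ is, by the two lower bounds on $\lambda$, super-polynomially small in $k$, $1/\beta$, $q(n-k)$ (and in $p(k-1)$ in case~2), this leakage is $o_n\!\big(\beta\sum_{i\in K}T_i\big)$, the required contradiction. Hence every node of $K$ is saturated, i.e.\ $K\subseteq\supp(x^*)$, for every $s\in K$.

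The step I expect to be the real obstacle is precisely this control of $x^*$ on $K$: one must cleanly isolate the ``filling'' phase of the diffusion, during which the relevant coordinate values are only polynomially large, from the later regime in which mass trickles across the weak boundary and $x^*$ can blow up like $1/\theta$, and one must track the dependence of the polynomial bound, of $\cut_G(K)$ and of $\max_{i\in K}\deg_\partial(i)$ on $k$, $\beta$, $q$ (and on $p$ and $\delta$ in case~2) finely enough that everything is absorbed by $\exp(-\omega_n(\lambda))$ under the stated $\lambda$-conditions; in particular, the appearance of $\log(q(n-k)/(p(k-1)))$ and $\log(1/(1-\delta))$ in part~2 rather than $\log(q(n-k))$ should emerge from $D_K$ improving from $O(k)$ to $O(1/((1-\delta)p(k-1)))$ once $G[K]$ is a sufficiently good expander. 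The reduction in the first paragraph and the concentration bounds in the second are routine by comparison.
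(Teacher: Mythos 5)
Your reduction to $K\subseteq\supp(x^*)$ and your concentration inputs (Lemma~\ref{lem:edge_weight} for the weight gap, Chernoff bounds for $\cut_G(K)$, and a min-cut bound on $G[K]$ in part~2) are all correct and are the same ones the paper uses. The leakage identity you write down is also exactly the quantity the paper controls. The genuine gap is in the step you yourself flag as ``the real obstacle'': you are right that the optimal $x^*$ can be as large as $1/\theta=\exp(\omega_n(\lambda))$ on $K$, but wrong that this forces a dynamic, two-phase analysis of Algorithm~\ref{alg:opt}. The contradiction hypothesis already kills the blow-up, for the following reason. You are assuming some $i_0\in K$ has $x^*_{i_0}=0$. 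For \emph{any} edge $(i,j)$, the flow $w_{ij}|x^*_i-x^*_j|$ cannot exceed $\|\Delta\|_1=(1+\beta)\sum_{i'\in K}T_{i'}$, because that is all the mass there is. On the good event of Lemma~\ref{lem:edge_weight}, every internal edge of $K$ has weight at least $1-o_n(1)$, so every internal potential drop obeys $|x^*_i-x^*_j|\le(1+\beta)\sum_{K}T\,/\,(1-o_n(1))$. Telescoping along a path inside $K$ from any $i\in K$ to $i_0$ (which exists by the connectedness assumption in part~1) gives, deterministically on the good event,
\begin{equation*}
\max_{i\in K}x^*_i\;\le\;k\,(1+o_n(1))\,(1+\beta)\sum_{i\in K}T_i,
\end{equation*}
with no dynamics and no effective-resistance machinery. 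Plugging this into the leakage bound $\theta\,\cut_G(K)\,\max_{i\in K}x^*_i$ (Lemma~\ref{lem:external_cut} bounds $\cut_G(K)$) and comparing against $\beta\sum_K T$ yields the contradiction under the stated $\lambda$-condition. In part~2, the same idea sharpens: sort $K$ by $x^*$-value, note that between any prefix and its complement in $K$ at least $(1-\delta)p(k-1)$ edges run (Lemma~\ref{lem:mincut}), and since the total flow across that cut is again at most $\|\Delta\|_1$, each consecutive gap is at most $\|\Delta\|_1/((1-\delta)p(k-1)(1-o_n(1)))$; telescoping over $k-1$ steps gives the factor $k/((1-\delta)p(k-1))$, which is precisely why the $\lambda$-condition in part~2 picks up $\log(q(n-k)/(p(k-1)))$ and $\log(1/(1-\delta))$ in place of $\log(q(n-k))$. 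Your ``boundary-deleted diffusion / filling-phase'' comparison is not needed, is not made rigorous in your sketch, and would be substantially harder to push through than the static argument, since Algorithm~\ref{alg:opt} does not terminate in finitely many steps and comparing its trajectory with a truncated dynamics requires a coupling you have not constructed.
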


Some discussions are in order. The first part of Theorem~\ref{thm:recovery1} does not assume anything about the internal connectivity of $K$. It applies as long as $K$ is connected, and this includes the extreme case when the induced subgraph on $K$ is a tree but each node in $K$ is also connected to many other nodes not in $K$. The second part of Theorem~\ref{thm:recovery1} requires a weaker condition on the strength of attribute signal $\hat\mu$. The additive term $\log(q(n-k))$ from part 1 is weakened to $\log(\frac{q(n-k)}{p(k-1)})$ due to the improved connectivity of $K$, under the additional assumption that $p \ge \Omega(\log k / k)$. We consider two specific choices of $T$. The first choice gives the exact bound on the number of false positives, and the second choice bounds the size of false positives in terms of volume~\cite{HFM2021}. Note that even in the case where the node attributes alone provide sufficient signal, the graph structure still plays a very important role as it allows the possibility that an algorithm would return a good output without having to explore all data points. For example, during the execution of Algorithm~\ref{alg:lgc}, one only needs to query the attributes of a node whenever they are required for subsequent computations.

Let us introduce one more notion before presenting the recovery guarantee with good, but not too good, node attributes. Given the contextual random model described in Definition~\ref{def:data_model}, consider a ``population'' graph $\bar{G} = (V,\bar{E},\bar{w})$ where $(i,j) \in \bar{E}$ for every pair $i,j$ such that $i\neq j$, and the edge weight $\bar{w}_{ij}$ satisfies $\bar{w}_{ij} = p\exp(-\gamma\|\mathbb{E}[X_i]-\mathbb{E}[X_j]\|_2^2) = p$ if $i,j \in K$, $\bar{w}_{ij} = q\exp(-\gamma\|\mathbb{E}[X_i]-\mathbb{E}[X_j]\|_2^2) \le qe^{-\gamma\hat\mu^2}$ if $i\in K, j \notin K$. A frequently used measure of cluster quality is conductance which quantifies the ratio between external and internal connectivity. For a set of nodes $C$ in $\bar{G}$, its conductance is defined as $\sum_{i\in C, j\notin C}\bar{w}_{ij}/\sum_{i \in C}\sum_{j \sim i}\bar{w}_{ij}$. For $0 \le c \le 1$ denote 
\[
    \eta(c) = \frac{p(k-1)}{p(k-1) + q(n-k)e^{-c\gamma\hat\mu^2}}.
\]
One may easily verify that the conductance of $K$ in $\bar{G}$ is upper bounded by $1-\eta(1)$. Therefore, the higher $\eta(1)$ is the lower conductance $K$ may have in $\bar{G}$. On the other hand, in the absence of node attributes, or if all nodes share identical attributes, then the conductance of $K$ in $\bar{G}$ is exactly $1-\eta(0)$. Note that $1-\eta(c) \ge 1-\eta(0)$ for any $c \ge 0$. Intuitively, a low conductance cluster is better connected internally than externally, and thus it should be easier to detect. Therefore, the advantage of having node attributes is that they help reduce the conductance of the target cluster, making it easier to recover from the population graph. While in practice one never works with the population graph, our next theorem indicates that, with overwhelming probability, the recoverability of $K$ in the population graph transfers to an realization of the random model in Definition~\ref{def:data_model}. More specifically, Theorem~\ref{thm:recovery2} says that when the node attributes are good, i.e. Assumption~\ref{assum:mu_sigma} holds, but not too good, i.e. conditions required in Theorem~\ref{thm:recovery1} may not hold, then Algorithm~\ref{alg:lgc} still fully recovers $K$ as long as there is sufficient internal connection. Moreover, the relative size of false positives (compared to the size of $K$) is upper bounded by $O(1/\eta(c)^2)-1$ for any $c<1$. Denote
\[
    m(\delta_1,\delta_2) = \frac{(1+3\delta_1+\frac{1}{p(k-1)})^2}{(1-\delta_1)(1-\delta_2)}, \quad \ T_{\max} = \max_{i \in K}T_i.
\]

\begin{theorem}[Recovery with good node attributes]\label{thm:recovery2}
Under Assumption~\ref{assum:mu_sigma}, if $p \ge \max(\frac{(3+\epsilon_1)}{\delta_1^2}\frac{\log k}{k-1}, \frac{(2+\epsilon_2)}{\delta_2\sqrt{1-\delta_1}}\frac{\sqrt{\log k}}{\sqrt{k-1}})$ where $0< \delta_1,\delta_2 \le 1$ and $\epsilon_1,\epsilon_2 > 0$, then with probability at least $1-o_n(1)-4k^{-\epsilon_1/3}-k^{-2\epsilon_2}$, for every seed node $s \in K$ with initial seed mass
\[
    \Delta_s = c_1T_{\max}\frac{m(\delta_1,\delta_2)k}{\eta(c_2)^2}
\]
for any constants $c_1 > 1$ and $c_2<1$, we have that $K \subseteq \supp(x^*)$. Moreover, if $T_i = 1$ for all $i \in V$ then 
\[
    |\supp(x^*)\backslash K| \le \left(\frac{c_1m(\delta_1,\delta_2)}{\eta(c_2)^2} - 1\right)|K|;
\]
if $T_i = \deg_G(i)$ for all $i \in V$
\[
    \vol_G(\supp(x^*)\backslash K) \le \left(\frac{c_1m(\delta_1,\delta_2)}{\eta(c_2)^2}\frac{(1+\delta_1)}{(1-\delta_1)}- 1\right)\vol_G(K).
\]
\end{theorem}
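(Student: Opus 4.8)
The argument has four parts: reduce to a handful of high-probability events governing the weights and degrees of the reweighted graph $G'$; use them to bound the conductance of $K$ in $G'$; show that the prescribed source mass is large enough to make every node of $K$ end up saturated with a strictly positive dual value; and combine $K \subseteq \supp(x^*)$ with the locality bound of Proposition~\ref{prop:local} to control the false positives.

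I would condition on the event of Lemma~\ref{lem:edge_weight} (so that $w'_{ij} \ge 1 - o_n(1)$ whenever $i,j \in K$, and, for any fixed $c_2 < 1$, $w'_{ij} \le e^{-c_2\gamma\hat\mu^2}$ whenever $i \in K$ and $j \notin K$), together with Chernoff/Bernstein bounds applied with a union bound over the $k$ nodes of $K$. Under $p \ge \frac{3+\epsilon_1}{\delta_1^2}\frac{\log k}{k-1}$ these pin the internal degrees to $\deg_K(i) \in [(1-\delta_1)p(k-1), (1+\delta_1)p(k-1)]$ for all $i\in K$ and concentrate the number of internal edges and the cut size near their means (four such events, each failing with probability at most $k^{-\epsilon_1/3}$); under $p \ge \frac{2+\epsilon_2}{\delta_2\sqrt{1-\delta_1}}\frac{\sqrt{\log k}}{\sqrt{k-1}}$ they control a residual fluctuation term responsible for the factor $1/(1-\delta_2)$ in $m(\delta_1,\delta_2)$ (failing with probability at most $k^{-2\epsilon_2}$). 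Altogether the bad event has probability at most $o_n(1) + 4k^{-\epsilon_1/3} + k^{-2\epsilon_2}$. In particular $p(k-1) \ge \log k$ forces $G'[K]$ to be connected, the internal conductance of $G'[K]$ is $\Omega(1)$, and the conductance of $K$ in $G'$ is at most $1 - \eta(c_2)$ up to the multiplicative $(1\pm\delta_j)$ slacks that are eventually absorbed into $m(\delta_1,\delta_2)$.

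The technical heart is to show that with $\Delta_s = c_1 T_{\max}\, m(\delta_1,\delta_2)\, k/\eta(c_2)^2$, which already exceeds $T_{\max}k \ge \sum_{i\in K}T_i$, every node of $K$ ends up saturated with $x^*_i > 0$. I would work from the KKT conditions for \eqref{eq:dual}: writing $m^* = \Delta - Lx^*$ for the final mass one has $0 \le m^* \le T$, $m^*_i = T_i$ on $\supp(x^*)$, $\sum_i m^*_i = \Delta_s$, and $x^*_i = 0$ implies $\sum_{j\sim i}w'_{ij}x^*_j \le T_i$. The driving intuition is that moving a unit of mass along an interior edge of $K$ costs about $1$ while moving it across $\partial K$ costs at least $e^{c_2\gamma\hat\mu^2}$, so the minimum-cost flow fills $K$ before leaking appreciably. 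One makes this quantitative by bounding the total leakage $\sum_{i\in K, j\notin K}w'_{ij}(x^*_i - x^*_j)$ in terms of $\max_{i\in K}x^*_i$ and the reweighted cut $w'(\partial K) \le (1+\delta_1)q(n-k)ke^{-c_2\gamma\hat\mu^2}$, controlling $\max_{i\in K}x^*_i$ through the expansion of $G'[K]$; equivalently, restricting the coordinate updates of Algorithm~\ref{alg:opt} to nodes of $K$ realizes, up to a monotone coupling, a flow diffusion on $G'[K]$ whose per-step leakage fraction is at most $1 - \eta(c_2)(1-O(\delta_1))$. The calibration of $\Delta_s$ then forces the mass retained in $K$ to equal $\sum_{i\in K}T_i$, and connectivity of $G'[K]$ propagates positivity of the dual variable outward from the seed to give $x^*_i > 0$ for every $i \in K$. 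The factor $m(\delta_1,\delta_2)$ absorbs the $(1\pm\delta)$ multiplicative slacks and the additive $1/(p(k-1))$ from comparing $\deg_K(i) = \deg_G(i) - \deg_\partial(i)$ to its mean, and the squares in $m(\delta_1,\delta_2)$ and $\eta(c_2)^{-2}$ reflect the two nested levels (a node's interior neighbors, and their interior neighbors) that appear when chasing the potentials.

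Finally, granting $K \subseteq \supp(x^*)$, the false-positive bounds are short. If $T_i = 1$, Proposition~\ref{prop:local} gives $|\supp(x^*)| \le \|\Delta\|_1 = \Delta_s$, hence $|\supp(x^*)\setminus K| \le \Delta_s - |K| = \left(\frac{c_1 m(\delta_1,\delta_2)}{\eta(c_2)^2} - 1\right)|K|$. If $T_i = \deg_G(i)$, then $m^* \ge 0$, $m^*_i = T_i$ on $\supp(x^*)$ and $\sum_i m^*_i = \Delta_s$ give $\vol_G(\supp(x^*)) = \sum_{i\in\supp(x^*)}m^*_i \le \Delta_s$, so $\vol_G(\supp(x^*)\setminus K) \le \Delta_s - \vol_G(K)$, and degree concentration (which yields $T_{\max}k \le \frac{1+\delta_1}{1-\delta_1}\vol_G(K)$) gives the claimed bound. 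I expect the main obstacle to be the core step, and within it the step of pinning $\max_{i\in K}x^*_i$ via the expansion of $G'[K]$ tightly enough to lose only the factor $m(\delta_1,\delta_2)$, while correctly accounting for the coupling between interior potentials and the possibly-large potentials at external nodes incident to $\partial K$.
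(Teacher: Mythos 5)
Your false-positive bounds are right and match the paper's closing step, and your structural intuition — that the $\eta(c_2)^{-2}$ and the $(1-\delta_1)(1-\delta_2)$ in $m(\delta_1,\delta_2)$ should come from ``two nested levels'' of neighbors — is exactly on the mark. But the core step, on which you yourself flag the obstacle, is a genuine gap: the sketch gestures at a ``monotone coupling'' of Algorithm~\ref{alg:opt} restricted to $K$ whose ``per-step leakage fraction'' is at most $1-\eta(c_2)(1-O(\delta_1))$, together with a bound on $\max_{i\in K}x^*_i$ via the spectral expansion of $G'[K]$. Neither of these is established, and it is not clear they could be in this form: the theorem is a statement about the optimal $x^*$ of a quadratic program, not about iterates, and the optimality conditions do not decompose into a clean per-iteration leakage. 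Moreover, the expansion of $G'[K]$ is never controlled by the hypotheses (they concentrate degrees and length-two path counts, not eigenvalues), so ``pinning $\max_{i\in K}x^*_i$ via expansion'' has no support from the lemmas you invoke.

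The paper's actual argument bypasses both difficulties. It works entirely at optimality and argues by contradiction: suppose $x^*_i=0$ for some $i\in K$. If $i$ is a neighbor of the seed $s$ in $K$, the constraint that $i$ receives at most $T_{\max}$ mass forces $x^*_s\le T_{\max}(1+o_n(1))$, and then the degree-concentration bounds cap the total mass leaving $s$ so tightly that $\Delta_s<\Delta_s$, a contradiction. If $i$ is not a neighbor of $s$, the paper invokes a dedicated combinatorial lemma (Lemma~\ref{lem:connectivity}): each node of $K$ has at least $(1-\delta_1)(1-\delta_2)p^2(k-1)$ mutually edge-disjoint paths of length at most two to the seed, with failure probability $\le 2k^{-\epsilon_1/3}+k^{-2\epsilon_2}$ under the stated lower bounds on $p$. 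Chasing the potential through these length-two paths — first bounding some intermediate $x^*_j$ for $j$ adjacent to both $i$ and $s$, then bounding $x^*_s$ — is exactly what produces the $p^2$, the product $(1-\delta_1)(1-\delta_2)$, and the $\eta(c_2)^{-2}$, and again yields $\Delta_s<\Delta_s$. Your ``four events each failing at rate $k^{-\epsilon_1/3}$'' also misidentifies the events: they are the two-sided degree concentrations $\deg_K(i)$ and $\deg_G(i)$ (Lemma~\ref{lem:degree}, $2k^{-\epsilon_1/3}$ each) rather than internal edge counts or cut size; there is no concentration on $|E(K,K)|$ in the proof. Your probability budget $o_n(1)+4k^{-\epsilon_1/3}+k^{-2\epsilon_2}$ is numerically right by accident, but for the wrong reasons. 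To repair the sketch, replace the leakage-coupling and expansion steps by a statement and proof of the well-connected-cluster lemma (edge-disjoint short paths to the seed under the two conditions on $p$), and carry out the contradiction argument at the level of $x^*$.
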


In the special case where there is no node attribute, we may simply take $\hat\mu = 0$ and Theorem~\ref{thm:recovery2} still holds. For this specific setting we obtain a nearly identical recovery guarantee (i.e. same assumption and same result) that has been previously obtained for local graph clustering using PageRank vectors without node attributes~\cite{HFM2021}, where the relative size of false positives is $O(1/\eta(0)^2-1)$. This comparison quantifies the advantage of having good node attributes as they reduce the bound to $O(1/\eta(c)^2-1)$ for any $c < 1$, which can be substantially smaller. Note that the expression $1/\eta(c)^2$ is jointly controlled by the combinatorial conductance of $K$ and the attribute signal $\hat\mu$. %Finally, one may think of the requirement that $c<1$ as the price we need to pay as we go from the population graph (recall the conductance of $K$ in the population graph is upper bounded by $1-\eta(1) \ge 1-\eta(c)$ for any $c<1$) to an actual realization of the random model.

\section{Experiments}\label{sec:experiments}

We evaluate the performance of Algorithm~\ref{alg:lgc} for local graph clustering with node attributes.\footnote{Code is available at https://github.com/s-h-yang/WFD.} First, we investigate empirically our theoretical results over synthetic data generated from a specification of the random model described in Definition~\ref{def:data_model}. We use the synthetic experiments to demonstrate (i) the distinction between having weak and strong graph structural information, and (ii) the distinction between having very good and moderately good node attributes. In addition, the synthetic experiment indicates the necessity of Assumption~\ref{assum:mu_sigma} in order for Algorithm~\ref{alg:lgc} to have notable performance improvement against method that does not use node attributes. Second, we carry out experiments using real-world data. We show that incorporating node attributes improves the F1 scores by an average of 4.3\% over 20 clusters from two academic co-authorship networks.

For additional experiments on synthetic, semi-synthetic and real-world data and comparisons with global methods, we refer the interested readers to Appendix~\ref{sec:additional_experiments} for more empirical results.

\subsection{Simulated data and results}\label{sec:synthetic}

We generate random graphs using the stochastic block model with block size $k = 500$ and the total number of clusters $r = 20$. The total number of nodes is $n = kr = 10,000$. Two nodes within the same cluster are connected with probability $p$, and two nodes from different clusters are connected with probability $q$. We fix $q=0.002$ and vary $p$ to control the strength of the structural signal. We randomly pick one of the clusters as the target cluster $K$. The dimension of the node attributes is set to $d=100$. For node attributes $X_i = \mu_i + Z_i$, we sample $Z_i$ from Gaussian distribution with mean 0 and identity covariance. Therefore $\sigma_{\ell} = 1$ for all $\ell = 1,2,\ldots,d$, and hence $\hat\sigma = 1$. We set $\mu_{i\ell} = a\hat\sigma\sqrt{\log n}/2\sqrt{d}$ for all $\ell$ if $i \in K$, and $\mu_{i\ell} = -a\hat\sigma\sqrt{\log n}/2\sqrt{d}$ for all $\ell$ if $i \not\in K$. In this way, we get that $\hat\mu = \max_{i \in K, j\not\in K}\|\mu_i-\mu_j\|_2 = a\hat\sigma\sqrt{\log n}$. We vary $a$ to control the strength of node attribute signal.

\begin{figure}[ht!]
    \centering
    \begin{subfigure}{0.45\textwidth}
        \centering
        \includegraphics[width=\textwidth]{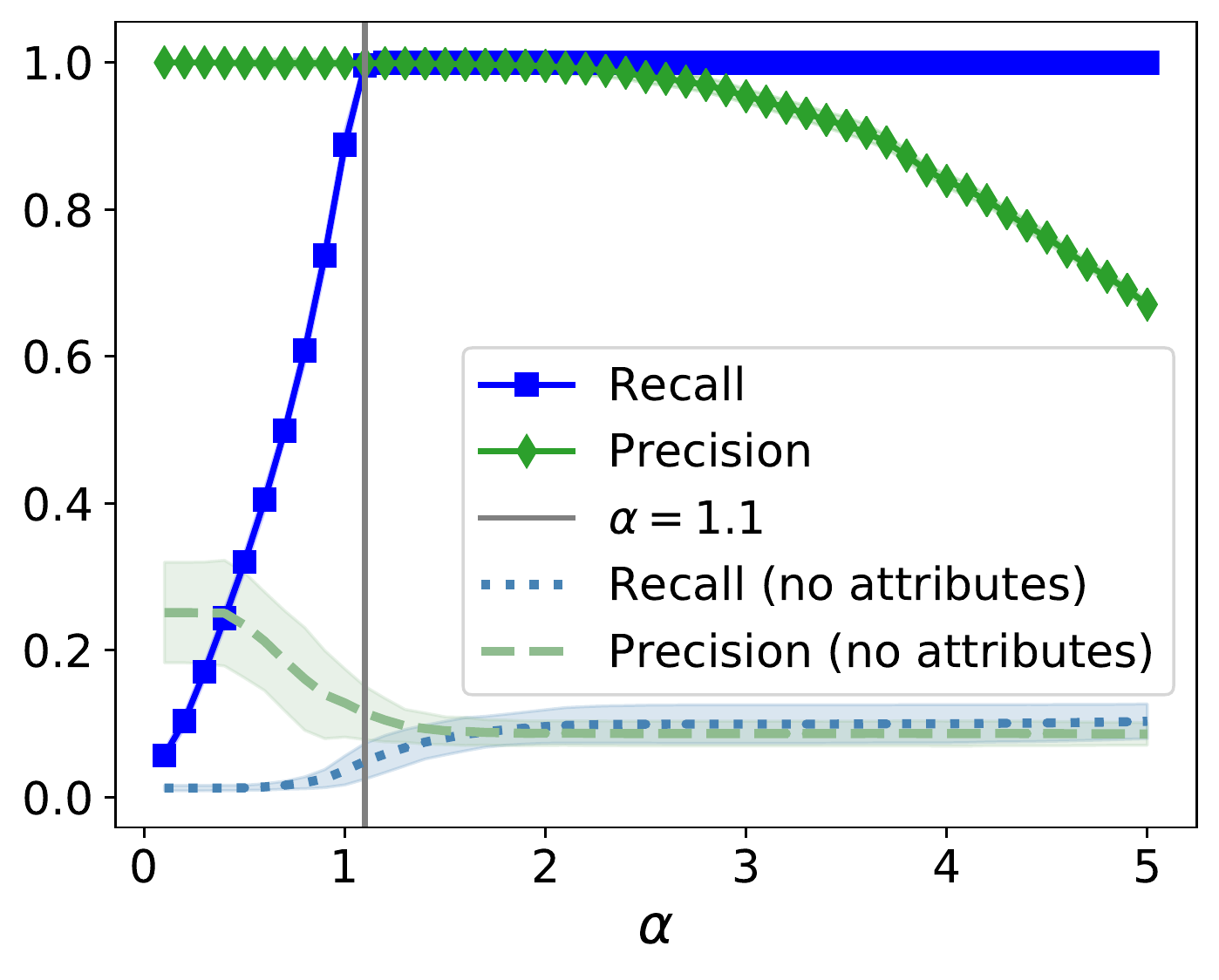}
        \caption{$p=0.01,q=0.002, \hat\mu=3\hat\sigma\log n$}
        \label{fig:vary_alpha_1}
    \end{subfigure}%
    \hskip 0.05\textwidth
    \begin{subfigure}{0.45\textwidth}
        \centering
        \includegraphics[width=\textwidth]{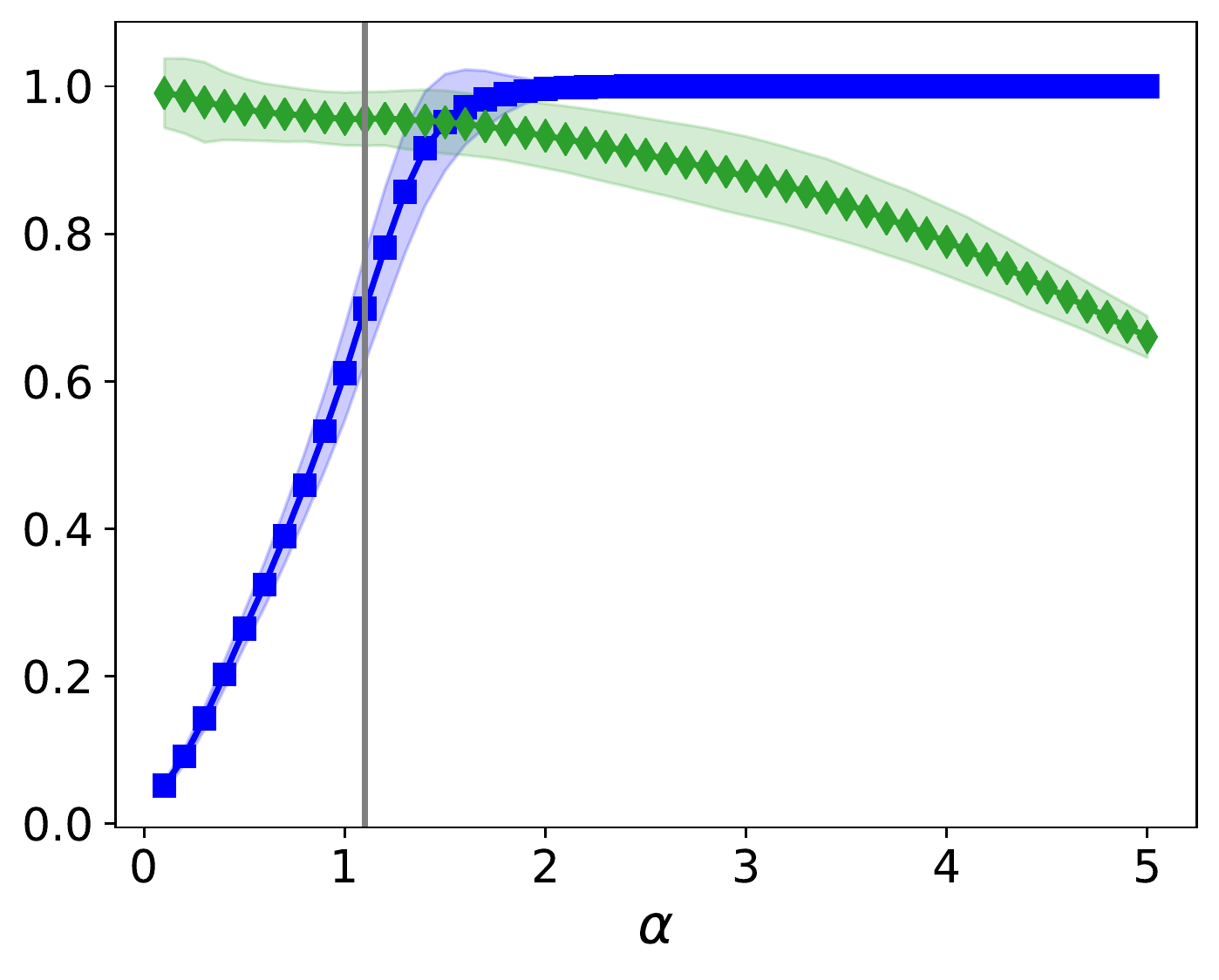}
        \caption{$p=0.01,q=0.002,\hat\mu=\frac{5}{2}\hat\sigma\log n$}
        \label{fig:vary_alpha_2}
    \end{subfigure}
    \vskip 0.15in
    \begin{subfigure}{0.45\textwidth}
        \centering
        \includegraphics[width=\textwidth]{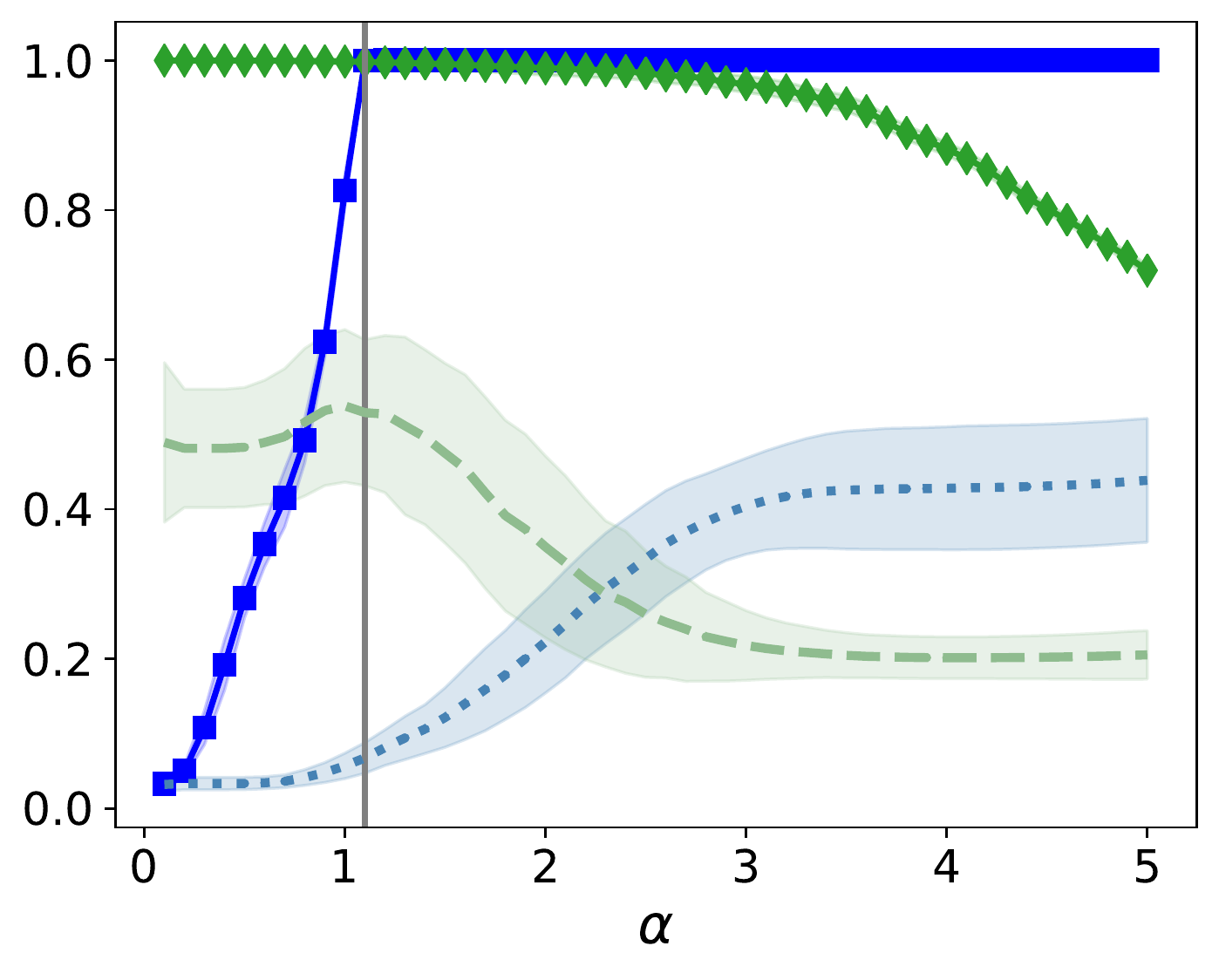}
        \caption{$p=0.03,q=0.002,\hat\mu=\frac{5}{2}\hat\sigma\log n$}
        \label{fig:vary_alpha_3}
    \end{subfigure}
    \caption{Demonstration of Theorem~\ref{thm:recovery1}. The lines show average performance over 100 trails. In each trial we randomly pick a seed node $s$ from the target cluster $K$. The error bars show standard deviation. Figure~\ref{fig:vary_alpha_1} and Figure~\ref{fig:vary_alpha_3} show full recovery of $K$ as soon as $\alpha>1$ (i.e. as soon as $\beta > 0$, see first part of Theorem~\ref{thm:recovery1}). The distinction between Figure~\ref{fig:vary_alpha_2} and Figure~\ref{fig:vary_alpha_3} demonstrate that the required threshold for $\hat\mu$ depends on $p$ (cf. second part of Theorem~\ref{thm:recovery1}). With very good node attributes, the performance of flow diffusion that uses node attributes is significantly better than the performance of flow diffusion that does not use node attributes.}
    \label{fig:vary_alpha}
\end{figure}

\begin{figure}[ht!]
    \centering
    \includegraphics[width=.6\columnwidth]{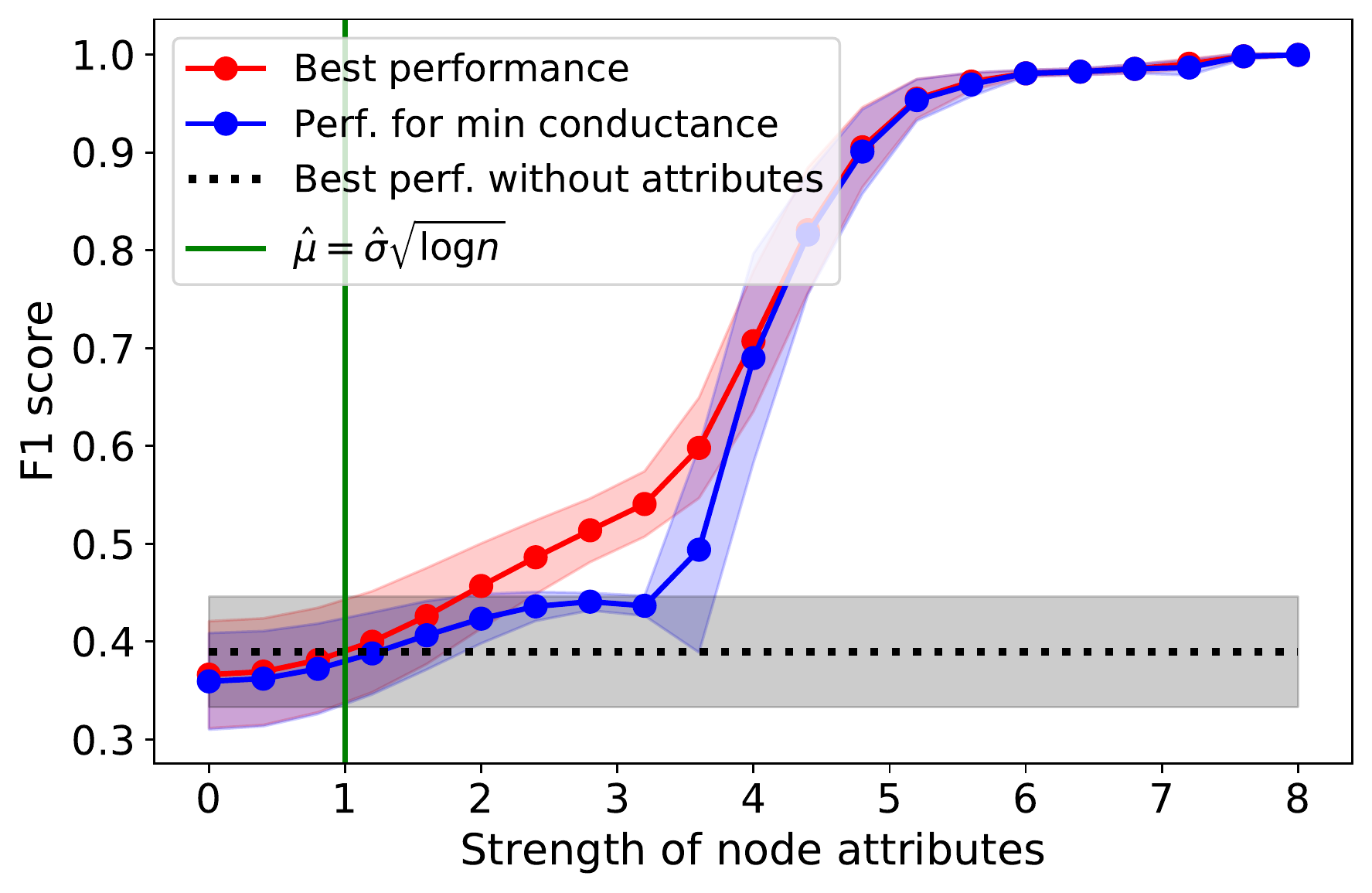}
    \caption{Performance of Algorithm~\ref{alg:lgc} as $\hat\mu$ increases. $\hat\mu$ needs to be larger than $\hat\sigma\sqrt{\log n}$ in order for node attributes to be useful. The $x$-axis shows the value of $a$ where $\hat\mu = a\hat\sigma\sqrt{\log n}$. We average over 100 trials, each trial uses a randomly selected seed node.}
    \label{fig:vary_mu}
\end{figure}

We set the sink capacity $T_i = 1$ for all $i$. We set the source mass $\Delta_s = \alpha k$ and we allow $\alpha$ to vary. We set $\gamma = (\log^{-3/2} n) / 4\hat\sigma^2$ so that $\gamma\hat\sigma^2 = o(\log^{-1} n)$ as required by Theorem~\ref{thm:recovery1} and Theorem~\ref{thm:recovery2}. To measure the quality of an output cluster $C := \supp(x^\tau)$, we use precision and recall which are defined as $|C\cap K|/|C|$ and $|C \cap K|/|K|$, respectively. The F1 score is the harmonic mean of precision and recall given by $2/(\mbox{Precision}^{-1} + \mbox{Recall}^{-1})$. For comparison we also consider the performance of unweighted flow diffusion which does not use node attributes. There are other methods for local graph clustering without node attributes, such as the $\ell_1$-regularized PageRank~\cite{ACL06,HFM2021}. We did not consider other methods because the unweighted flow diffusion is shown to achieve state-of-the-art performance~\cite{FWY20}. Moreover, the comparison between weighted and unweighted flow diffusions, which either use or does not use node attributes, allows us to obtain a fair estimate on the benefits of node attributes.

Figure~\ref{fig:vary_alpha} shows detailed views of the performance of Algorithm~\ref{alg:lgc} as we vary $\alpha$ between $[0.1,5]$ with $0.1$ increments. It is used to demonstrate the two claims of Theorem~\ref{thm:recovery1}. In Figure~\ref{fig:vary_alpha_1}, we set $p=0.01 < \log k / k$, so the target cluster $K$ is very sparse. %\footnote{When $p < \log k / k$, with very high probability $K$ will be disconnected. We generated multiple instances and picked one where $K$ is connected.}
On average, each node $i \in K$ only has 5 neighbors inside $K$ while it has 19 neighbors outside of $K$. This means that the graph structural information alone is not very helpful for recovering $K$. On the other hand, we set $a = 3\sqrt{\log n}$ so $\hat\mu = 3\hat\sigma\log n$. This means that the node attributes contain very strong signal. In this case, observe that as soon as $\alpha$ becomes strictly larger than 1, the output cluster $C$ fully recovers $K$, i.e. Recall = 1. This demonstrates the first claim of Theorem~\ref{thm:recovery1}. As a comparison, the unweighted flow diffusion which does not use node attributes has very poor performance for every choice of $\alpha$. This is expected because edge connectivity reveals very little clustering information. In Figure~\ref{fig:vary_alpha_2}, we keep the same graph structure but slightly weaken the node attributes to $\hat\mu = \frac{5}{2}\hat\sigma\log n$ by reducing $a$. This stops the output cluster $C$ from fully recovering $K$ for small $\alpha$ larger than 1. The algorithm still has a good performance if one chooses $\alpha$ properly. This scenario is covered by Theorem~\ref{thm:recovery2} and we will discuss more about it later. In Figure~\ref{fig:vary_alpha_3}, we keep the same node attributes as in Figure~\ref{fig:vary_alpha_2} but increase $p$ from 0.01 to 0.03 which is slightly larger than $2\log k / k$. In this case, the output cluster $C$ again fully recovers $K$ as soon as $\alpha$ is strictly larger than 1. The distinction between Figure~\ref{fig:vary_alpha_2} and Figure~\ref{fig:vary_alpha_3} means that the required threshold for $\hat\mu$ to fully recover $K$ at any $\alpha>1$ decreases as $p$ increases. This demonstrates the second claim of Theorem~\ref{thm:recovery1}.

In Figure~\ref{fig:vary_mu} we consider a more realistic setting where one may not know the size of the target cluster $K$ and the node attributes may be noisy. We keep the same graph connectivity (i.e. $p=0.03$ and $q=0.002$) and vary $a$ between $[0, 8]$ with $0.5$ increments. Recall that the node attributes are set in a way such that $\hat\mu = a\hat\sigma\sqrt{\log n}$, therefore the strength of node attributes increases as $a$ increases. For each choice of $a$, given a seed node $s$, we run Algorithm~\ref{alg:lgc} multiple times with source mass $\alpha k$ for $\alpha \in \{1.1,1.6,\ldots,10.1\}$. This gives multiple output clusters, one from each choice of $\alpha$. We consider two cases for selecting a final cluster. The first case is a best-case scenario where we pick the cluster that achieves the best F1 score, the second case is a more realistic case where we pick the cluster that has the minimum conductance. Given edge weights $w_{ij}$ and a cluster $C$, we consider weighted conductance which is the ratio
\[
\frac{\sum_{i \in C, j \not\in C} w_{ij}}{\sum_{i \in C}\sum_{j\sim i}w_{ij}}.
\]
Figure~\ref{fig:vary_mu} illustrates the performance of Algorithm~\ref{alg:lgc} in these two cases. The $x$-axis of Figure~\ref{fig:vary_mu} is the value of $a$ where $\hat\mu = a\hat\sigma\sqrt{\log n}$. Overall, the performance improves as $\hat\mu$ increases. When the node attributes are reasonably strong, e.g. $a \ge 4$, the scenario where we select a cluster based on minimum conductance matches with the best-case performance. Note that, the higher $\hat\mu$ is, the lower $\eta(c)$ is for any $0 < c \le 1$, and according to Theorem~\ref{thm:recovery2}, there should be less false positives and hence a higher F1 score. This is exactly what Figure~\ref{fig:vary_mu} shows. In Figure~\ref{fig:vary_mu} we also plot the best-case performance of unweighted flow diffusion without node attributes. When the node attributes are very noisy, and in particular, when $\hat\mu \le \hat\sigma\sqrt{\log n}$ where Assumption~\ref{assum:mu_sigma} clearly fails, we see that using node attributes can be harmful as it can lead to worse performance than not using node attributes at all. On the other hand, once the node attributes become strong enough, e.g., $a \ge 4$, using node attributes start to yield much better outcome.

\subsection{Real-world graphs and results}

We evaluate the performance of Algorithm~\ref{alg:lgc} on two co-authorship graphs based on the Microsoft Academic Graph from the KDD Cup 2016 challenge~\cite{SMBG18}.\footnote{In Appendix~\ref{sec:amazon-results} we include additional experiments using Amazon co-purchase graph~\cite{mcauley2015image} and demonstrate the performance of Algorithm~\ref{alg:lgc} when the node attributes are not strong enough. (F1 only increases by 1\% on average.)} In these graphs, nodes are authors, and two nodes are connected by an edge if they have coauthored a paper. The clusters are defined according to the most active research field of each author. The node attributes represent paper keywords for each author's papers. The first graph consists of 18,333 computer science researchers and 81,894 connections among them. Each computer science researcher belongs to one of the 15 ground-truth clusters. The second graph consists of 34,493 physics researchers and 247,962 connections among them. Each physics researcher belongs to one of the 5 ground-truth clusters. Details of node attributes and cluster sizes are found in Appendix~\ref{sec:additional_experiments}. 
\begin{table}[h!]
\caption{F1 scores for local clustering in co-authorship networks}
\label{tab:real-coauthor-results}
  \centering
  \begin{tabular}{clrrr}
    \toprule
    Network & Cluster & No attr. & Use attr. & Improv.\\
    \midrule
    \multirow{15}{*}{\rotatebox[origin=c]{90}{Computer Science}} 
    & Bioinformatics & 32.1 & 39.3 & 7.2 \\
    & Machine Learning & 30.9 & 37.3 & 6.4 \\
    & Computer Vision & 37.6 & 35.5 & -2.1  \\
    & NLP & 45.2 & 52.3 & 7.1 \\
    & Graphics & 38.6 & 49.2 & 10.6 \\
    & Networks & 44.1 & 47.0  & 2.9 \\
    & Security & 29.9 & 35.7 & 5.8 \\
    & Databases & 48.5 &  58.1 & 9.6 \\
    & Data Mining & 27.5 & 28.8 & 1.3 \\
    & Game Theory & 60.6 & 66.0 & 5.4 \\
    & HCI & 70.0 & 77.6  & 7.6 \\
    & Information Theory & 47.4 & 46.9 & -0.5 \\
    & Medical Informatics & 65.7 & 70.3& 4.6 \\
    & Robotics & 59.9 & 59.9 & 0.0 \\
    & Theoretical CS & 66.3 & 70.7 & 4.4 \\
    \midrule
    \multirow{5}{*}{\rotatebox[origin=c]{90}{Physics}} 
    & Phys. Rev. A & 69.4 & 70.9 & 1.5 \\
    & Phys. Rev. B & 41.4 & 42.3 & 0.9 \\
    & Phys. Rev. C & 79.3 & 82.1 & 2.8 \\
    & Phys. Rev. D & 62.3 & 68.9 & 6.6 \\
    & Phys. Rev. E & 49.5 & 53.7 & 4.2 \\
    \midrule
    \multicolumn{2}{c}{AVERAGE} & 50.3  & 54.6 & 4.3\\
    \bottomrule
  \end{tabular}
\end{table}

For both graphs, we consider the ground-truth communities as the target clusters. We consider two choices for the sink capacities $T$. The first is $T_i = \deg_G(i)$ for all $i$ and the second is $T_i = 1$ for all $i$. For each target cluster $K$ in a graph, given a seed node $s \in K$, we run Algorithm~\ref{alg:lgc} with source mass $\Delta_s = \alpha\sum_{i \in K}T_i$ for $\alpha \in \{1.5,1.75,2,\ldots,5\}$. We select the output cluster that has the minimum conductance and measure the recovery quality using the F1 score. For each of the 20 target clusters we run 100 trials and for each trial we use a different seed node. We report the average F1 scores (as percentage) using the first choice for $T$ in Table~\ref{tab:real-coauthor-results}. For both graphs, we find that setting the sink capacities to be equal to the node degrees generally yields a better clustering result than setting the sink capacities to 1. Additional results under the second choice for $T$, along with more details on parameter choices, are found in Appendix~\ref{sec:additional_experiments}. In most cases, incorporating node attributes improves recovery accuracy. Over the total 20 clusters in the two co-authorship networks, using node attributes increases the F1 score by 4.3\% on average.

%The first graph consists of 18,333 computer science researchers and 81,894 connections among them. The second graph consists of 34,493 physics researchers and 247,962 connections among them. We provide details about node attributes and the ground-truth clusters in Appendix~\ref{sec:additional_experiments}. For each target cluster $K$ and for each seed node $s \in K$, we run Algorithm~\ref{alg:lgc} with source mass $\Delta_s = \alpha\sum_{i \in K}T_i$ for $\alpha \in \{1.5,1.75,2,\ldots,5\}$. We select the output cluster that has the minimum edge-weighted conductance. For each target cluster $K$ we carry out 100 trials and each trial uses a different seed node. We report the average F1 scores in Table~\ref{tab:real-coauthor-results}. More Details about the empirical setup and parameter choices are found in Appendix~\ref{sec:additional_experiments}. We compare the results with unweighted flow diffusion that does not use node attributes. In most cases, incorporating node attributes improves the recovery accuracy. Over the total 20 clusters in the 2 co-authorship networks, using node attributes increases the F1 scores by 4.3\% on average.

\section{Conclusion and future work}\label{sec:conclusion}

In this work we propose and analyze a simple algorithm for local graph clustering with node attributes. We provide conditions under which the algorithm is guaranteed to work well. We empirically demonstrate the advantage of incorporating node attributes over both synthetic and real-world datasets. To the best of our knowledge, this is the first local graph clustering algorithm for attributed graphs that also has provable guarantees. The current work is the first step towards building principled tools for local learning on graphs using both structural and attribute information without processing the whole graph. An interesting future direction is to incorporate node embedding and parameter learning into local diffusion, where the attributes and their relative importance may be optimized simultaneously alongside the local diffusion process.

\bibliography{references}
\bibliographystyle{plain}

\newpage
\appendix
\onecolumn

\section{Primal-dual solutions of flow diffusion}\label{sec:primal-dual-opt}
Recall that we denote $f^*$ and $x^*$ as the optimal solutions of the primal and dual flow diffusion problem \eqref{eq:primal} and \eqref{eq:dual}, respectively. We derive two useful properties of $x^*$ based on the primal-dual relationships between $f^*$ and $x^*$. In Appendix~\ref{sec:proofs} when we analyze the support of $x^*$, we will repeatedly use these properties to characterize the nodes covered by $\supp(x^*)$. Note that
\begin{align*}
    & \min_f \frac{1}{2}f^TWf \quad \mbox{s.t.} \; \Delta + B^TWf \le T\\
    =~& \min_f \max_{x\ge0} \frac{1}{2}f^TWf + x^T(\Delta + B^TWf - T) \\
    =~& \max_{x\ge0} \min_f  \frac{1}{2}f^TWf + x^T(\Delta + B^TWf - T)\\
    =~& \max_{x\ge0} -\frac{1}{2}x^TB^TWBx + x^T(\Delta-T),
\end{align*}
therefore the optimal solutions $f^*$ and $x^*$ are related by $f^* = -Bx^*$. According to the physical interpretation of the flow variables $f$, this means that, in an optimal flow diffusion, the amount of mass that moves from node $i$ to node $j$ is precisely $w_{ij}(x^*_i-x^*_j)$ where $w_{ij}$ is the weight for the edge $(i,j)$. Moreover, we have $x^*_i > 0$ only if $\Delta_i + [B^TWf^*]_i = T_i$. Recall that the quantity $\Delta_i + [B^TWf^*]_i $ represents the amount of mass at node $i$ after spreading mass according to $f^*$, therefore, we get that $x^*_i > 0$ only if the final mass at node $i$ equals exactly to its sink capacity $T_i$. In this case, we say that node $i$ is {\em saturated}.

\section{Proofs}\label{sec:proofs}
\subsection{Proof of Lemma~\ref{lem:edge_weight}}
We have that
\begin{equation}\label{eq:edge_weight_exponent_expression}
    \|X_i - X_j\|_2^2 = \left\{\begin{array}{ll}\|Z_i-Z_j\|_2^2, & \mbox{if} \; i,j \in K, \\ \|Z_i-Z_j\|_2^2 + \|\mu_i-\mu_j\|_2^2 + (\mu_i-\mu_j)^T(Z_i-Z_j), & \mbox{if} \; i \in K, j \not\in K.\end{array}\right.
\end{equation}
Consider the random variable
\[
    \|Z_i-Z_j\|_2^2-\mathbb{E}[\|Z_i-Z_j\|_2^2] = \sum_{\ell=1}^d \bigg((Z_{i\ell}-Z_{j\ell})^2 - \mathbb{E}[(Z_{i\ell}-Z_{j\ell})^2]\bigg).
\]
Each term in the summation is sub-exponential and satisfies
\[
    \|(Z_{i\ell}-Z_{j\ell})^2 - \mathbb{E}[(Z_{i\ell}-Z_{j\ell})^2]\|_{\psi_1} 
    \le C\|(Z_{i\ell}-Z_{j\ell})^2\|_{\psi_1} 
    = C\|Z_{i\ell}-Z_{j\ell}\|_{\psi_2}^2
    \le 2C\|Z_{i\ell}\|_{\psi_2}^2
    \le C' \sigma_{\ell}^2
\]
for some absolute constants $C,C'$, where $\|\cdot\|_{\psi_1}$ and $\|\cdot\|_{\psi_2}$ denote the sub-exponential norm and the sub-Gaussian norm, respectively~\cite{vershynin2018high}. The first inequality follows from standard centering inequality for the sub-exponential norm (e.g. see Lemma 2.6.8 and Exercise 2.7.10 in \cite{vershynin2018high}), and the second equality follows from Lemma 2.7.6 in \cite{vershynin2018high}. Therefore, we may apply a Bernstein-type inequality for the sum of sub-exponential random variables (e.g. see Theorem 2.8.1 in \cite{vershynin2018high}) and get
\begin{align*}
    &\mathbb{P}\left(\Big|\|Z_i-Z_j\|_2^2-\mathbb{E}\|Z_i-Z_j\|_2^2\Big| > t\right)\\
    \le~&\exp\left(-\min\left(\frac{t^2}{c\sum_{\ell=1}^d \|(Z_{i\ell}-Z_{j\ell})^2 - \mathbb{E}[(Z_{i\ell}-Z_{j\ell})^2]\|_{\psi_1}^2}, \frac{t}{c'\max_{\ell} \|(Z_{i\ell}-Z_{j\ell})^2 - \mathbb{E}[(Z_{i\ell}-Z_{j\ell})^2]\|_{\psi_1}}\right)\right)\\
    =~& \exp\left(-\min\left(\frac{t^2}{c'\sum_{\ell=1}^d \sigma_{\ell}^4}, \frac{t}{c''\hat\sigma^2}\right)\right)
\end{align*}
for some absolute constants $c,c'$. Set $t = c''\hat\sigma^2\log n$ for a large enough constant $c''$, use $\sum_{\ell=1}^d (\sigma_{\ell}/\hat\sigma)^4 \le \sum_{\ell=1}^d (\sigma_{\ell}/\hat\sigma)^2 = O(\log n)$ which follows from Assumption~\ref{assum:mu_sigma}, and take a union bound over all $i,j \in V$, we get that with probability at least $1-o_n(1)$, for all $i,j \in V$ it holds that
\begin{equation}\label{eq:sub-exp-noise}
\begin{split}
    \|Z_i-Z_j\|_2^2 
    &\le \mathbb{E}\|Z_i-Z_j\|_2^2 + O(\hat\sigma^2\log n)\\
    &\le \tilde{c}\sum_{\ell=1}^d \|Z_{i\ell}-Z_{j\ell}\|_{\psi_2}^2 + O(\hat\sigma^2\log n)\\
    &\le \tilde{c}'\sum_{\ell=1}^d \sigma_{\ell}^2 + O(\hat\sigma^2\log n)\\
    &= O(\hat\sigma^2\log n),
\end{split}
\end{equation}
where $\tilde{c},\tilde{c}'$ are absolute constants.

For $i \in K$ and $j \notin K$, the term $(\mu_i-\mu_j)^T(Z_i-Z_j) = \sum_{\ell=1}^d (\mu_{i\ell}-\mu_{j\ell})(Z_{i\ell}-Z_{j\ell})$ is a sum of independent and mean zero sub-Gaussian random variables. We may apply a general Hoeffding’s inequality (see Lemma 2.6.3 in \cite{vershynin2018high}) and get that
\[
    \mathbb{P}(|(\mu_i-\mu_j)^T(Z_i-Z_j)| \ge t) 
    \le 2\exp\left(\frac{ct^2}{\max_{\ell}\|Z_{i\ell}-Z_{j\ell}\|_{\psi_2}^2\|\mu_i-\mu_j\|_2^2}\right) 
    \le 2\exp\left(-\frac{c't^2}{\hat\sigma^2\|\mu_i-\mu_j\|_2^2}\right),
\]
and hence by setting $t = c''\hat\sigma \sqrt{\log n}\|\mu_i-\mu_j\|_2$ for a large enough constant $c''$ we get that with probability at least $1-o_n(1)$,
\begin{equation}\label{eq:sub-g-noise}
    (\mu_i-\mu_j)^T(Z_i-Z_j) \ge -O(\hat\sigma \sqrt{\log n}\|\mu_i-\mu_j\|_2), \; \forall i \in K, j \notin K.
\end{equation}
Combining \eqref{eq:edge_weight_exponent_expression}, \eqref{eq:sub-exp-noise}, \eqref{eq:sub-g-noise}, and using $\|\mu_i-\mu_j\|_2 \ge \hat\mu = \omega(\hat\sigma\sqrt{\log n})$, we get that with probability at least $1-o_n(1)$,
\begin{align*}
    \|X_i - X_j\|_2^2 &\le O(\hat\sigma^2\log n), \; \forall i \in K, \forall j \in K,\\ 
    \|X_i - X_j\|_2^2 &\ge \|\mu_i-\mu_j\|_2^2 - O(\hat\sigma \sqrt{\log n}\|\mu_i-\mu_j\|_2) \\
    &= \|\mu_i-\mu_j\|_2^2(1-o_n(1)) \ge \hat\mu^2(1-o_n(1)),  \; \forall i \in K, \forall j \not\in K.
\end{align*}
By Assumption~\ref{assum:mu_sigma}, we may pick $\gamma$ that satisfies $\gamma\hat\sigma^2 = o(\log^{-1}n)$ and $\gamma\hat\mu^2 = \omega_n(\lambda)$, and for any such $\gamma$ we have
\[
\begin{split}
    \exp(-\gamma\|X_i - X_j\|_2^2) & \ge \exp(-o_n(1)), \; \forall i \in K, \forall j \in K, \\ 
    \exp(-\gamma\|X_i - X_j\|_2^2) & \le \exp(-\gamma\hat\mu^2(1-o_n(1))) , \; \forall i \in K, \forall j \not\in K,
\end{split}
\]
as required.

\subsection{Proof of Theorem~\ref{thm:recovery1}}

We start with part 1 of the theorem. Without loss of generality let us assume that the node indices are such that $K = \{1,2,\ldots, k\}$ and that $x^*_1 \ge x^*_2 \ge \ldots \ge x^*_k$. In order to show that $K \subseteq \supp(x^*)$, it suffices to show that $x^*_k > 0$. Assume for the sake of contradiction that $x^*_k = 0$. Note that since the initial mass is $(1+\beta)\sum_{i \in K}T_i$, in an optimal flow routing, the amount of mass that flows over an edge cannot be greater than $(1+\beta)\sum_{i \in K}T_i$. This means that $w_{ij}|x^*_i - x^*_j| \le (1+\beta)\sum_{i' \in K}T_{i'}$ for all $i,j \in V$ (recall the basic properties of $x^*$ provided in Section~\ref{sec:primal-dual-opt}). Therefore we have that
\[
    x^*_1 
    \le \sum_{i=1}^{k-1}\frac{(1+\beta)\sum_{i' \in K}T_{i'}}{w_{i(i+1)}} + x^*_k 
    = \sum_{i=1}^{k-1}\frac{(1+\beta)\sum_{i' \in K}T_{i'}}{w_{i(i+1)}}. 
\]
It then follows from Lemma~\ref{lem:edge_weight} that with probability at least $1-o_n(1)$,
\[
    x^*_1 \le (1+\beta)k(1+o_n(1))\sum_{i \in K}T_i.
\]
On the other hand, the total amount of mass that leaves $K$ is
\[
    \sum_{i=1}^k \sum_{\substack{j \ge k+1\\ j \sim i}} w_{ij}(x^*_i - x^*_j) 
    \le \sum_{i=1}^k x^*_i \sum_{\substack{j \ge k+1\\ j \sim i}} w_{ij}
    \le x^*_1\sum_{(i,j)\in \cut_G(K)} w_{ij} .
\]
Apply Lemma~\ref{lem:edge_weight}, Lemma~\ref{lem:external_cut} and pick $\epsilon=\delta=1$ there, and use the above bound on $x^*_1$, we get that, with probability at least $1-o_n(1)-k^{-1/3}$,
\begin{align*}
    \sum_{i=1}^k \sum_{\substack{j \ge k+1\\ j \sim i}} w_{ij}(x^*_i - x^*_j)
    \le (1+\beta) k^2(1+o_n(1))(2q(n-k) + 4\log k/k)\exp(-\gamma\hat\mu^2(1-o_n(1))) \sum_{i \in K}T_i.
\end{align*}
Since we started with $(1+\beta)\sum_{i \in K}T_i$ initial mass inside $K$, nodes in $K$ can settle at most $\sum_{i \in K}T_i$ units of mass, we know that at least $\beta \sum_{i \in K}T_i$ amount of mass must leave $K$. In what follows we show that this cannot be the case for appropriately chosen $\gamma$, and hence arriving at the desired contradiction. Since $\hat\mu = \omega(\hat\sigma\sqrt{\log n(1+\lambda)})$, we may pick $\gamma$ such that $\gamma\hat\sigma^2 = o(\log^{-1}n)$ to satisfy the assumption required for Lemma~\ref{lem:edge_weight}, and at the same time $\gamma\hat\mu^2 = \omega(1+\lambda)$. Since $\lambda = \Omega(\log k + \log(q(n-k)) + \log(1/\beta))$, we know that for any terms $a_n = o_n(1)$ and $b_n = o_n(1)$ and for sufficiently large $n$,
\[
    \gamma\hat\mu^2(1-a_n) > 2\log k  + \log(2q(n-k)+ 4\log k/k) + \log(1/\beta+1) + \log(1+b_n),
\]
which implies that, for sufficiently large $n$,
\[
    (1+\beta) k^2(1+o_n(1))(2q(n-k) + 4\log k/k)\exp(-\gamma\hat\mu^2(1-o_n(1))) < \beta,
\]
and hence
\[
     \sum_{i=1}^k \sum_{\substack{j \ge k+1\\ j \sim i}} w_{ij}(x^*_i - x^*_j) < \beta \sum_{i \in K}T_i,
\]
which is the desired contradiction. Therefore we must have that $x^*_k >0$ and consequently $K \subseteq \supp(x^*)$. Now, since $x^*_i > 0$ for all $i \in K$, this means that nodes inside $K$ settles exactly $\sum_{i \in K}T_i$ units mass, and hence exactly $\beta \sum_{i \in K}T_i$ mass leaves $K$. Because $x^*_i > 0$ only if node $j$ is saturated with $T_i$ unit mass, we get that $\sum_{i \in \supp(x^*_i)\backslash K}T_i \le \beta\sum_{i \in K}T_i$.

Part 2 of the theorem is prove by following the same reasoning. Assume for the sake of contradiction that $x^*_k=0$. Since $p \ge \frac{(4+\epsilon)}{\delta^2}\frac{\log k}{k-1}$, we apply Lemma~\ref{lem:mincut} and get that with probability at least $1-ek^{-\epsilon/2}$, $\cut_K(C) \ge (1-\delta)p(k-1)$ for every $C \subseteq K$ such that $1 \le |C| \le k-1$. We will assume that this event holds. Moreover, for any $1 \le i \le k-1$, the total amount of mass that moves from $\{1,2,\ldots,i\}$ to $\{i+1, i+2, \ldots, k\}$ cannot be greater than $(1+\beta)\sum_{i \in K}T_i$. Since there are at least $(1-\delta)p(k-1)$ edges between $\{1,2,\ldots,i\}$ and $\{i+1, i+2, \ldots, k\}$, we must have that
\[
    x^*_i - x^*_{i+1} \le \frac{(1+\beta)\sum_{i' \in K}T_{i'}}{(1-\delta)p(k-1)\min_{j,j'\in K, j\sim j'}w_{jj'}}, \forall i=1,2,\ldots,k-1,
\]
because, otherwise, there would be more than $(1+\beta)\sum_{i \in K}T_i$ mass that moves from $\{1,2,\ldots,i\}$ to $\{i+1, i+2, \ldots, k\}$. Apply Lemma~\ref{lem:edge_weight} we have that, with probability at least $1-o_n(1) - ek^{-\epsilon/2}$,
\[
    x^*_1 
    \le \sum_{i=1}^{k-1}\frac{(1+\beta)\sum_{i' \in K}T_{i'}}{(1-\delta)p(k-1)\min_{j,j'\in K, j\sim j'}w_{jj'}}
    \le \frac{(1+\beta)k(1+o_n(1))\sum_{i' \in K}T_{i'}}{(1-\delta)p(k-1)}.
\]
The rest of the proof proceeds as the proof of part 1.

\subsection{Proof of Theorem~\ref{thm:recovery2}}

To see that $K \subseteq \supp(x^*)$, let us assume for the sake of contradiction that $x^*_i = 0$ for some $i \in K$. This means that node $i$ receives at most $T_i \le T_{\max}$ mass, because otherwise we would have $x^*_i > 0$. We also know that $i\neq s$ because $T_{\max} < \Delta_s$. Denote $F := \{j \in K : j \sim s\}$. We will consider two cases depending on if $i \in F$ or not. If $i \in F$, then we must have that, with probability at least $1-o_n(1)$,
\[
    w_{is}(x^*_s - x^*_i) \le T_{\max} \iff x^*_s \le T_{\max}/w_{is} + x^*_i  = T_{\max}(1+a_n)
\]
for some $a_n = o_n(1)$, where the last equality follows Lemma~\ref{lem:edge_weight}. Moreover, since $c_2 < 1$  we have that
\begin{equation}\label{eq:eta_ineq}
   \frac{p(k-1)}{\eta(c_2)} = p(k-1) + q(n-k)e^{-c_2\gamma\hat\mu^2} > p(k-1) + q(n-k)e^{-\gamma\hat\mu^2(1-b_n)}
\end{equation}
for any $b_n = o_n(1)$ and for all sufficiently large $n$. Therefore, with probability at least $1-o_n(1)-4k^{-\epsilon_1/3}$ and for all sufficiently large $n$, the total amount of mass that is sent out from node $s$ is
\begin{align*}
    \sum_{\ell \sim s}w_{is}(x^*_s - x^*_{\ell})
    &= \sum_{\substack{\ell \sim s\\ \ell \in K}}w_{is}(x^*_s - x^*_{\ell}) + \sum_{\substack{\ell \sim s\\ \ell \notin K}}w_{is}(x^*_s - x^*_{\ell})\\
    &\stackrel{\text{(i)}}{\le} \sum_{\substack{\ell \sim s\\ \ell \in K}} x^*_s  + \sum_{\substack{\ell \sim s\\ \ell \notin K}}e^{-\gamma\hat\mu^2(1-b_n)}x^*_s \qquad \mbox{for some}~ b_n = o_n(1)\\
    &\stackrel{\text{(ii)}}{\le} (1+\delta_1)p(k-1)x^*_s + ((1+\delta_1)q(n-k) + 2\delta_1p(k-1))e^{-\gamma\hat\mu^2(1-b_n)}x^*_s\\
    &\le (1+3\delta_1)\left(p(k-1) + q(n-k)e^{-\gamma\hat\mu^2(1-b_n)}\right)x^*_s\\
    &\stackrel{\text{(iii)}}{<} (1+3\delta_1)\frac{p(k-1)}{\eta(c_2)}x^*_s\\
    &\le (1+3\delta_1)\frac{p(k-1)}{\eta(c_2)}T_{\max}(1+a_n)\\
    &\stackrel{\text{(iv)}}{<} c_1(1+3\delta_1)\frac{p(k-1)}{\eta(c_2)}T_{\max},
\end{align*}
where (i) follows from Lemma~\ref{lem:edge_weight} and $x^* \ge 0$, (ii) follows from Lemma~\ref{lem:degree}, (iii) follows from \eqref{eq:eta_ineq}, (iv) follows from the assumption that $c_1 > 1$ and hence for all sufficiently large $n$ we have $c_1 \ge (1+a_n)$ where $a_n = o_n(1)$. Since the initial mass equals the sum of $T_s$ and the total amount of mass that is sent out from $s$, we get that the total amount of initial mass is
\[
    \Delta_s < c_1(1+3\delta_1)\frac{p(k-1)}{\eta(c_2)}T_{\max} + T_{\max} < c_1T_{\max}\left(\frac{(1+3\delta_1)(1+\frac{1}{k-1})k}{\eta(c_2)}\right) < c_1T_{\max}\frac{m(\delta_1,\delta_2)k}{\eta(c_2)^2} = \Delta_s,
\]
which is a contradiction. Therefore, we must have $i \not\in F$. 

Suppose now that $i \not\in F$. Then we know that node $i$ receives at most $T_i \le T_{\max}$ mass from its neighbors. In particular, node $i$ receives at most $T_{\max}$ mass from nodes in $F$, that is, $\sum_{\substack{j \in F \\ j\sim i}}w_{ij}x^*_j \le T_{\max}$. By Lemma~\ref{lem:connectivity}, we know that with probability at least $1-2k^{-\epsilon_1/3}-k^{-2\epsilon_2}$, node $i$ has at least $(1-\delta_1)(1-\delta_2)p^2(k-1)$ neighbors in $F$. Apply Lemma~\ref{lem:edge_weight} we get that, with probability at least $1-o_n(1)-2k^{-\epsilon_1/3}-k^{-2\epsilon_2}$,
\begin{align*}
    \sum_{\substack{j \in F \\ j\sim i}}w_{ij}x^*_j \le T_i
   &\implies (1-\delta_1)(1-\delta_2)p^2(k-1) \cdot \min_{\substack{j \in F \\ j\sim i}} x^*_j\le T_{\max}\cdot\max_{\substack{j \in F \\ j\sim i}} \frac{1}{w_{ij}}\\
   &\implies \min_{\substack{j \in F \\ j\sim i}} \le \frac{T_{\max}(1+a_n)}{(1-\delta_1)(1-\delta_2)p^2(k-1)}\\
   &\implies \min_{\substack{j \in F}} \le \frac{T_{\max}(1+a_n)}{(1-\delta_1)(1-\delta_2)p^2(k-1)}
\end{align*}
for some $a_n = o_n(1)$. Let $j \in F$ a node such that $x^*_j \le x^*_{\ell}$ for all $\ell \in F$, then
\begin{equation}\label{eq:x_j_upper_bound}
    x^*_j \le \frac{T_{\max}(1+a_n)}{(1-\delta_1)(1-\delta_2)p^2(k-1)}.
\end{equation}
By Lemma~\ref{lem:connectivity}, with probability at least $1-2k^{-\epsilon_1/3}-k^{-2\epsilon_2}$, node $j$ has at least $(1-\delta_1)(1-\delta_2)p^2(k-1)-1$ neighbors in $F$. Since $x^*_j \le x^*_{\ell}$ for all $\ell \in F$ and $x^*_j \le x^*_s$, we know that
\begin{equation}\label{eq:x_j_neighbors}
    |\{\ell \in K: x^*_{\ell} \ge x^*_j\}| \ge (1-\delta_1)(1-\delta_2)p^2(k-1)
\end{equation}
Therefore, for all sufficiently large $n$, with probability at least $1-o_n(1)-4k^{-\epsilon_1/3}-k^{-2\epsilon_2}$, the maximum amount of mass that node $j$ can send out is
\begin{align*}
    \sum_{\ell \sim j} w_{j\ell}(x^*_j-x^*_{\ell})
    &=\sum_{\substack{\ell\sim j \\ \ell \in K}}w_{j\ell} (x^*_j-x^*_{\ell}) + \sum_{\substack{\ell\sim j \\\ell \not\in K}}w_{j\ell} (x^*_j-x^*_{\ell})\\
    &\stackrel{\text{(i)}}{\le}\sum_{\substack{\ell\sim j \\ \ell \in K}}w_{j\ell}(x^*_j-x^*_{\ell}) +  \sum_{\substack{\ell\sim j \\\ell \not\in K}}e^{-\gamma\hat\mu^2(1-b_n)}(x^*_j-x^*_{\ell})  \qquad \mbox{for some}~ b_n = o_n(1)\\
    &\stackrel{\text{(ii)}}{\le} \Big((1+\delta_1)p(k-1)-(1-\delta_1)(1-\delta_2)p^2(k-1)\Big)x^*_j \\
    &\qquad + \Big((1+\delta_1)q(n-k) + 2\delta_1p(k-1)\Big)e^{-\gamma\hat\mu^2(1-b_n)}x^*_j\\
    &\le \left[(1+3\delta_1)\left(p(k-1) + q(n-k)e^{-\gamma\hat\mu^2(1-b_n)}\right) - (1-\delta_1)(1-\delta_2)p^2(k-1)\right]x^*_j\\
    &\stackrel{\text{(iii)}}{\le} \left[(1+3\delta_1)\frac{p(k-1)}{\eta(c_2)} - (1-\delta_1)(1-\delta_2)p^2(k-1)\right]x^*_j\\
    &\stackrel{\text{(iv)}}{\le} \left[(1+3\delta_1)\frac{p(k-1)}{\eta(c_2)} - (1-\delta_1)(1-\delta_2)p^2(k-1)\right]\frac{T_{\max}(1+a_n)}{(1-\delta_1)(1-\delta_2)p^2(k-1)}\\
    &\le T_{\max}(1+a_n)\frac{(1+3\delta_1)}{(1-\delta_1)(1-\delta_2)}\frac{1}{p\eta(c_2)} - T_{\max},
\end{align*}
where (i) follows from Lemma~\ref{lem:edge_weight}, (ii) follows from Lemma~\ref{lem:degree} and \eqref{eq:x_j_neighbors}, (iii) follows from \eqref{eq:eta_ineq} and (iv) follows from \eqref{eq:x_j_upper_bound}. Now, since node $j$ settles at most $T_j \le T_{\max}$ mass, the maximum amount of mass that node $j$ receives is
\[
    T_{\max}(1+a_n)\frac{(1+3\delta_1)}{(1-\delta_1)(1-\delta_2)}\frac{1}{p\eta(c_2)} - T_{\max} + T_{\max} = T_{\max}(1+a_n)\frac{(1+3\delta_1)}{(1-\delta_1)(1-\delta_2)}\frac{1}{p\eta(c_2)}.
\]
This means that
\begin{align*}
    &w_{js}(x^*_s-x^*_j) \le T_{\max}(1+a_n)\frac{(1+3\delta_1)}{(1-\delta_1)(1-\delta_2)}\frac{1}{p\eta(c_2)}\\
    \implies & x^*_s \le \frac{T_{\max}(1+a'_n)}{(1-\delta_1)(1-\delta_2)}\left(\frac{1}{p^2(k-1)}+ \frac{(1+3\delta_1)}{p\eta(c_2)}\right)
\end{align*}
for some $a'_n = o_n(1)$, where we have applied Lemma~\ref{lem:edge_weight} for $w_{js}$. Apply the same reasoning as before, we get that with probability at least $1-o_n(1)-4k^{-\epsilon_1/3}-k^{-2\epsilon_2}$ for all sufficiently large $n$, the total amount of mass that is sent out from node $s$ is
\begin{align*}
    \sum_{\ell \sim s}w_{is}(x^*_s - x^*_{\ell})
    &< (1+3\delta_1)\frac{p(k-1)}{\eta(c_2)}x^*_s\\
    &\le \frac{T_{\max}(1+a'_n)}{(1-\delta_1)(1-\delta_2)}\left(\frac{(1+3\delta_1)}{p\eta(c_2)} + \frac{(1+3\delta_1)^2(k-1)}{\eta(c_2)^2}\right)\\
    &\le c_1T_{\max}\frac{(1+3\delta_1)}{(1-\delta_1)(1-\delta_2)}\frac{(1+3\delta_2+\frac{1}{p(k-1)})}{\eta(c_2)^2}(k-1)\\
    &\le c_1T_{\max}\frac{m(\delta_1,\delta_2)(k-1)}{\eta(c_2)^2},
\end{align*}
but then this means that the total amount of initial mass is
\[
    \Delta_s < c_1T_{\max}\frac{m(\delta_1,\delta_2)(k-1)}{\eta(c_2)^2} + T_{\max} <  c_1T_{\max}\frac{m(\delta_1,\delta_2)k}{\eta(c_2)^2} = \Delta_s
\]
which is a contradiction. Therefore we must have $i \not\in K$, but then this contradicts our assumption that $i \in K$. Since our choice of $i,s \in K$ were arbitrary, this means that $x^*_i > 0$ for all $i \in K$ and for all $s \in K$.

Finally, the upper bound on the false positives follows directly from the fact that $x^*_i > 0$ only if node $i$ is saturated with exactly $T_i$ mass. When $T_i = 1$ for all $i$ the result follows directly from $\Delta_s = c_1m(\delta_1,\delta_2)k/\eta(c_2)^2$. When $T_i = \deg(i)$ for all $i$, we may apply Lemma~\ref{lem:degree} and get that
\[
    \Delta_s \le \frac{c_1m(\delta_1,\delta_2)}{\eta(c_2)^2}(1+\delta_1)k(p(k-1)+q(n-k)) \le \frac{c_1m(\delta_1,\delta_2)}{\eta(c_2)^2}\frac{(1+\delta_1)}{(1-\delta_1)}\vol(K)
\]
from which the result follows.

\section{Technical lemmas}

\begin{lemma}[Lower bound of internal cut]\label{lem:mincut}
For any $0 < \delta \le 1$ and $\epsilon > 0$, if $p \ge \frac{(4+\epsilon)}{\delta^2}\frac{\log k}{k-1}$ and $k\ge20$, then with probability at least $1-ek^{-\epsilon/2}$ we have that $\cut_K(C) \ge (1-\delta)p(k-1)$ for all proper subsets $C \subset K$.
\end{lemma}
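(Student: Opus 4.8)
The plan is to treat $\cut_K(C)$, for a fixed nonempty proper subset $C\subset K$ with $|C|=c$ (so $1\le c\le k-1$), as a binomial random variable: by Definition~\ref{def:data_model} the $c(k-c)$ potential edges between $C$ and $K\setminus C$ are present independently with probability $p$, hence $\cut_K(C)\sim\mathrm{Bin}(c(k-c),p)$ with mean $c(k-c)p$. The algebraic fact that makes the argument go through is the identity $c(k-c)=(k-1)+(c-1)(k-c-1)$, which shows $c(k-c)\ge k-1$ for every $1\le c\le k-1$ (equality only at $c\in\{1,k-1\}$), and also $c(k-c)\ge ck/2$ whenever $c\le k/2$. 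So the mean of every cut is at least $(k-1)p$, and it grows at least linearly in $\min\{c,k-c\}$.

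The second step is a lower-tail multiplicative Chernoff bound. Writing the target threshold as $(1-\delta)p(k-1)=(1-\delta)\tfrac{k-1}{c(k-c)}\cdot\big(c(k-c)p\big)$ and using $(1-\delta)\tfrac{k-1}{c(k-c)}\le 1-\delta$, Chernoff gives
\[
\mathbb{P}\big(\cut_K(C)<(1-\delta)p(k-1)\big)\ \le\ \exp\!\Big(-\tfrac{1}{2}\tfrac{\big(c(k-c)-(1-\delta)(k-1)\big)^2}{c(k-c)}\,p\Big)\ \le\ \exp\!\Big(-\tfrac{\delta^2}{2}\,c(k-c)\,p\Big).
\]
Plugging in the hypothesis $p\ge\frac{4+\epsilon}{\delta^2}\frac{\log k}{k-1}$ together with the lower bounds on $c(k-c)$ from the first step, this exponent is at least $(2+\tfrac{\epsilon}{2})\log k$ when $c=1$, and at least $c\log k$ with an extra surplus of order $\epsilon\, c\log k$ when $2\le c\le k/2$.

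The final step is a union bound over subsets. Using the symmetry $\cut_K(C)=\cut_K(K\setminus C)$ I would restrict to $1\le c\le\lfloor k/2\rfloor$ and bound the number of size-$c$ subsets by $\binom{k}{c}\le(ek/c)^c$. The $c=1$ term then contributes at most $k\cdot k^{-(2+\epsilon/2)}=k^{-1-\epsilon/2}$. For $2\le c\le\lfloor k/2\rfloor$ one checks — again via $c(k-c)=(k-1)+(c-1)(k-c-1)$ — that the Chernoff exponent exceeds $\log\binom{k}{c}$ by a margin growing in $c$, so $\binom{k}{c}\,\mathbb{P}(\cdots)$ decays geometrically in $c$; summing this series and adding the $c=1$ contribution yields the asserted $1-ek^{-\epsilon/2}$ once $k\ge20$, the numeric threshold being needed only to control the constants in the geometric series and the $k/(k-1)$ factors.

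I expect this last step to be the main obstacle: the naive union bound over all $\approx 2^k$ subsets is hopelessly lossy, so the delicate point is to show that the per-subset failure probability decays fast enough in $|C|$ to overwhelm the $\binom{k}{c}$ combinatorial factor uniformly in $c$, and then to carry out the constant bookkeeping so that the total stays below $ek^{-\epsilon/2}$ simultaneously for every $\epsilon>0$ and every $k\ge20$. The remaining ingredients — the binomial identification, the identity for $c(k-c)$, and the Chernoff estimate — are routine.
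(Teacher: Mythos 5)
Your proposal follows the same skeleton as the paper's proof: identify $\cut_K(C)$ for $|C|=c$ as a $\mathrm{Bin}(c(k-c),p)$ sum, apply a Chernoff lower-tail bound, union-bound over the $\binom{k}{c}\le(ek/c)^c$ size-$c$ subsets, restrict to $1\le c\le\lfloor k/2\rfloor$ by the symmetry $\cut_K(C)=\cut_K(K\setminus C)$, and sum over $c$ with the dominant contribution at $c=1$ (which is where the constant $4+\epsilon$ in the hypothesis on $p$ is calibrated). Where you genuinely differ is in the form of Chernoff used: you take the simplified multiplicative bound $\exp(-\delta'^2\mu/2)$ and exploit the observation $c(k-c)\ge k-1$ to get the uniform, very clean per-subset bound $\exp(-\tfrac{\delta^2}{2}c(k-c)p)$; the paper instead keeps the raw bound $e^{-\mu}(e\mu/a)^a$, writes the resulting exponent (after the union-bound factor) as an explicit function $f(j)$ of the subset size, and establishes $f(j)\le-(1+\epsilon/2)\log k+1$ by a calculus argument — showing $f'(j)\le0$ on $[1,3k/8]$ so the maximum is at $j=1$, then bounding $f$ directly on $[3k/8,k/2]$. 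Your reduction yields a tidier per-subset exponent, and I believe the final sum can indeed be pushed below $ek^{-\epsilon/2}$ for $k\ge20$ along the lines you sketch. Two caveats, however. First, the step you flag as "the main obstacle" — verifying that the Chernoff exponent beats $\log\binom{k}{c}$ by enough, uniformly in $c$, with the precise constants — is exactly where the paper's proof does nearly all of its work, and in your write-up it remains a plausibility argument rather than a completed estimate. Second, the claim that $\binom{k}{c}\,\mathbb{P}(\cdots)$ "decays geometrically in $c$" is not quite right near $c\approx k/2$, where the increments in the exponent flatten to zero; this causes no harm since the terms there are already tiny, but a complete argument should treat the small-$c$ and large-$c$ regimes separately, much as the paper splits $[1,3k/8]$ from $[3k/8,k/2]$.
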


\begin{proof}
Consider integers $j$ such that $1 \le j \le k/2$. First fix some $j$ and let $C \subset K$ be such that $|C| = j$. Note that $\cut(C)$ is the sum of $j(k-j)$ independent Bernoulli random variables with expectation $\mathbb{E}(\cut(C)) = pj(k-j)$. Therefore we may apply the Chernoff bound and get
\begin{align*}
	\mathbb{P}(\cut_K(C) \le (1-\delta)p(k-1)) \le e^{-pj(k-j)}\left(\frac{ej(k-j)}{(1-\delta)(k-1)}\right)^{(1-\delta)p(k-1)}.
	%&= \exp\left[-pj(k-j) + (1-\delta)p(k-1) + (1-\delta)p(k-1)\log\left(\frac{j(k-j)}{(1-\delta)(k-1)}\right)\right].
\end{align*}

By a union bound over all subsets $C \subset K$ such that $|C| = j$ we get that
\begin{align}
	&\mathbb{P}\left(\cut_K(C) \le (1-\delta)p(k-1), \forall C \subset K \ \mbox{s.t.} \ |C| = j\right)\nonumber\\
	\le~&{k \choose j} e^{-pj(k-j)}\left(\frac{ej(k-j)}{(1-\delta)(k-1)}\right)^{(1-\delta)p(k-1)}\nonumber\\
	\le~&\left(\frac{ek}{j}\right)^j \exp\left[-pj(k-j) + (1-\delta)p(k-1) + (1-\delta)p(k-1)\log\left(\frac{j(k-j)}{(1-\delta)(k-1)}\right)\right]\nonumber\\
	=~&\exp\left[-pj(k-j) + (1-\delta)p(k-1) + (1-\delta)p(k-1)\log\left(\frac{j(k-j)}{(1-\delta)(k-1)}\right) + j + j\log\left(\frac{k}{j}\right)\right]\label{eq:mincut_prob_exp}.
\end{align}
Now consider the exponent in \eqref{eq:mincut_prob_exp},
\begin{align*}
	f(j) = -pj(k-j) + (1-\delta)p(k-1) + (1-\delta)p(k-1)\log\left(\frac{j(k-j)}{(1-\delta)(k-1)}\right) + j + j\log\left(\frac{k}{j}\right),
\end{align*}
we will show that $f(j) \le -(1+\epsilon/2)\log k + 1$ for all $1 \le j \le k/2$ and $k\ge20$. Let us first consider the interval $[1, 3k/8]$. The derivative of $f(j)$ with respect to $j$ is
\[
	f'(j) = -p(k-2j) + (1-\delta)p(k-1)\frac{(k-2j)}{j(k-j)} + \log\left(\frac{k}{j}\right),
\]
and we have that $f'(j) \le 0$ for all $1 \le j \le 3k/8$. To see this, for $1 \le j \le k/2$ we have
\begin{equation}\label{eq:mincut_display1}
\begin{split}
	\frac{(k-1)}{j(k-j)} \le 1 
	&\iff \frac{(1-\delta)p(k-1)(k-2j)}{j(k-j)} \le (1-\delta)p(k-2j)\\
	&\iff  -p(k-2j) + (1-\delta)p(k-1)\frac{(k-2j)}{j(k-j)} \le -\delta p(k-2j),
\end{split}
\end{equation}
moreover, since $p \ge \frac{(4+\epsilon)}{\delta^2}\frac{\log k}{k-1}$, for $1 \le j \le 3k/8$ and $k\ge2$ we have
\begin{equation}\label{eq:mincut_display2}
	-\delta p(k-2j) \le -\frac{\delta pk }{4} \le -\frac{(4+\epsilon)k}{4\delta(k-1)}\log k \le -\log k \le -\log (k/j),
\end{equation}
and thus by combining \eqref{eq:mincut_display1} and \eqref{eq:mincut_display2} we get $f'(j) \le -\delta p(k-2j) + \log (k/j) \le 0$ for all $1 \le j \le 3k/8$. This implies that $f(j)$ achieves maximum at $j = 1$ over the interval $[1,3k/8]$. Therefore, for all $1 \le j \le 3k/8$,
\begin{align*}
	f(j) \le f(1) 
	&= -p(k-1) + (1-\delta)p(k-1) - (1-\delta)p(k-1) \log(1-\delta) + 1 + \log k\\
	&= -p(k-1)(\delta + (1-\delta)\log(1-\delta)) + 1 + \log k\\
	&\le -p(k-1)\delta^2/2 + 1 + \log k\\
	&\le -(2+\epsilon/2)\log k + 1 + \log k\\
	&= -(1+\epsilon/2) \log k + 1
\end{align*}
where the second inequality follows from the numeric inequality $\delta + (1-\delta)\log(1-\delta) \ge \delta^2/2$ for $\delta \in (0,1)$, and the third inequality follows from the assumption that $p \ge \frac{(4+\epsilon)}{\delta^2}\frac{\log k}{k-1}$.

Next, consider the value of $f(j)$ over the interval $[3k/8, k/2]$. We have that for $3k/8 \le j \le k/2$ and $k\ge 20$,
\begin{align*}
    f(j) 
    &\le -p\left(\frac{3k}{8}\right)\left(\frac{5k}{8}\right) + (1-\delta)p(k-1)\left(1 + \log\left(\frac{k^2/4}{(1-\delta)(k-1)}\right)\right) + \frac{k}{2} + \frac{3k}{8}\log\left(\frac{8}{3}\right)\\
    &\le -\frac{15}{64}pk^2 + p(k-1)\left(1 + (1-\delta)\log\left(\frac{k^2/4}{k-1}\right)\right) + \frac{22}{25}k\\
    &\le -pk\left(\frac{41}{256}k - 1 - \log\left(\frac{k^2/4}{k-1}\right)\right) - k\left(\frac{19}{256}pk - \frac{22}{25}\right)\\
    &\le -\frac{1}{2}pk\\
    &\le -(2+\epsilon/2)\log k.
\end{align*}
In the above, the first inequality follows from the fact that the term $j\log(k/j)$ is decreasing over the interval $[3k/8, k/2]$, the second inequality follows from the numeric inequality $(1-\delta) - (1-\delta)\log(1-\delta) \le 1$ for $\delta \in (0,1)$ which follows from the fact that $\log x \ge 1 - 1/x$ for $x > 0$, the forth inequality follows from $k \ge 20$.

Therefore, the exponent in \eqref{eq:mincut_prob_exp} satisfies $f(j) \le -(1+\epsilon/2)\log k + 1$ for all $1 \le j \le k/2$ and $k \ge 20$. Finally, apply a union bound we get that
\begin{align*}
    &\mathbb{P}(\cut_K(C) \le (1-\delta)p(k-1), \forall C \subset K \ \mbox{s.t.} \ 1\le |C| \le k-1 )\\
    &=\sum_{j=1}^{\lfloor k/2 \rfloor}  \mathbb{P}(\cut_K(C) \le (1-\delta)p(k-1), \forall C \subset K \ \mbox{s.t.} \ |C|=j )\\
    &\le\exp(f(j) + \log k)
    \le \exp\left(-\frac{\epsilon}{2}\log k + 1\right) = ek^{-\epsilon/2}
\end{align*}
which proves the required  result.
\end{proof}

\begin{lemma}[Upper bound of external cut]\label{lem:external_cut}
For any $0 < \delta \le 1$ and $\epsilon > 0$ with probability at least $1 - k^{-\epsilon/3}$ we have that $\cut_G(K) \le (1+\delta)qk(n-k) + (e\epsilon/\delta^2 + \epsilon/3)\log k$.
\end{lemma}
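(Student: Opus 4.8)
The random variable $\cut_G(K)$ counts edges between $K$ and $V\setminus K$, and by Definition~\ref{def:data_model} each of the $k(n-k)$ pairs $(i,j)$ with $i\in K$ and $j\notin K$ contributes an edge independently with probability $q$ (the arbitrary rule used for pairs inside $V\setminus K$ plays no role here). Hence $\cut_G(K)$ is a sum of $k(n-k)$ i.i.d.\ Bernoulli$(q)$ variables with mean $\mu:=qk(n-k)$, and the statement is simply a one-sided binomial tail bound. The plan is to prove it by a Chernoff argument, splitting into two regimes according to how $\mu$ compares with $\log k$: I would use the multiplicative Chernoff inequality $\mathbb{P}(\cut_G(K)\ge(1+\delta)\mu)\le e^{-\delta^2\mu/3}$ when $\mu$ is large, and the Poisson-type tail bound $\mathbb{P}(\cut_G(K)\ge a)\le(e\mu/a)^{a}$ when $\mu$ is small.

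In the regime $\mu\ge(\epsilon/\delta^2)\log k$ the additive slack is essentially free: writing $a:=(1+\delta)\mu+(e\epsilon/\delta^2+\epsilon/3)\log k\ge(1+\delta)\mu$, the multiplicative Chernoff bound gives $\mathbb{P}(\cut_G(K)\ge a)\le e^{-\delta^2\mu/3}\le e^{-(\epsilon/3)\log k}=k^{-\epsilon/3}$, which is exactly what is needed.

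In the complementary regime $\mu<(\epsilon/\delta^2)\log k$, the additive term $s:=(e\epsilon/\delta^2+\epsilon/3)\log k$ satisfies $s>e\mu$, so $a\ge s>e\mu$ and the Poisson-type tail bound applies, reducing the goal to the inequality $a\log\!\big(a/(e\mu)\big)\ge(\epsilon/3)\log k$. I would verify this by writing $a/(e\mu)=(1+\delta)/e+s/(e\mu)$ and using $\mu<(\epsilon/\delta^2)\log k$ to get $s/(e\mu)>1+\delta^2/(3e)$, hence $a/(e\mu)>1+1/e$ and $\log(a/(e\mu))$ is bounded below by the absolute constant $\log(1+1/e)$; combined with $a\ge s\ge e\epsilon\log k$ (valid since $\delta\le1$) this yields the claim after a one-line numerical check.

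The only delicate point is arranging the split threshold and the two constants $e\epsilon/\delta^2$ and $\epsilon/3$ in the additive term so that the large-$\mu$ and small-$\mu$ estimates glue together cleanly (and keeping track that $\log$ here denotes the natural logarithm, as implicit in the $e$'s). This is bookkeeping rather than a genuine obstacle: there is no concentration difficulty beyond the textbook binomial tail bounds, since $\cut_G(K)$ is an exact sum of independent indicators.
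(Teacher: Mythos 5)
Your proposal is correct and follows essentially the same route as the paper's proof: the same observation that $\cut_G(K)$ is a sum of $k(n-k)$ i.i.d.\ Bernoulli$(q)$ variables, the same case split at the threshold $\mu = (\epsilon/\delta^2)\log k$, with multiplicative Chernoff in the large-$\mu$ regime and the Poisson-type tail bound in the small-$\mu$ regime. The only (cosmetic) difference is in the small-$\mu$ bookkeeping: the paper keeps the $e^{-\mu}$ factor in the tail bound and derives the estimate $c(\epsilon,\delta) + c(\epsilon,\delta)\log\bigl(\mu/(c(\epsilon,\delta)\log k)\bigr) \le -\epsilon/3$ algebraically, whereas you drop $e^{-\mu}$ and lower-bound $\log(a/(e\mu))$ by the absolute constant $\log(1+1/e)$ before a quick numerical check — both are valid and lead to the same conclusion.
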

\begin{proof}
Note that $\cut_G(K)$ is the sum of $k(n-k)$ independent Bernoulli random variables with mean $\mathbb{E}[\cut_G(K)] = qk(n-k)$. We consider two cases depending on the value of $qk(n-k)$. If $qk(n-k) \ge \epsilon\log k/\delta^2$, then by the multiplicative Chernoff bound we have that,
\begin{equation}\label{eq:external_cut_prob1}
    \mathbb{P}(\cut_G(K) \ge (1+\delta)qk(n-k)) \le \exp\left(-\frac{\delta^2}{3}qk(n-k)\right) \le \exp\left(-\epsilon\log k/3\right).
\end{equation}
Next consider the case $qk(n-k) \le \epsilon\log k/\delta^2$. Denote $c(\epsilon,\delta) := e\epsilon/\delta^2 + \epsilon/3$ and observe that
\[
    \frac{\epsilon}{\delta^2} 
    = \frac{c(\epsilon,\delta) -  \epsilon/3}{e} 
    = \left(1-\frac{\epsilon/3}{c(\epsilon,\delta)}\right)\frac{c(\epsilon,\delta)}{e} 
    \le \exp\left(-\frac{\epsilon/3}{c(\epsilon,\delta)}\right)\frac{c(\epsilon,\delta)}{e}.
\]
This means that
\[
    qk(n-k) \le \frac{\epsilon}{\delta^2}\log k \le  \exp\left(-\frac{\epsilon/3}{c(\epsilon,\delta)}-1\right)c(\epsilon,\delta)\log k,
\]
and thus
\[
    \frac{qk(n-k)}{c(\epsilon,\delta)\log k} \le \exp\left(-\frac{\epsilon/3}{c(\epsilon,\delta)}-1\right) 
    \ \iff \
    c(\epsilon,\delta) + c(\epsilon,\delta)\log\left(\frac{qk(n-k)}{c(\epsilon,\delta)\log k}\right) \le - \epsilon/3.
\]
Therefore the Chernoff bound yields
\begin{equation}\label{eq:external_cut_prob2}
\begin{split}
    \mathbb{P}\left(\cut_G(K) \ge c(\epsilon,\delta)\log k\right) 
    &\le e^{-qk(n-k)}\left(\frac{eqk(n-k)}{c(\epsilon,\delta)\log k}\right)^{c(\epsilon,\delta)\log k}\\
    &=\exp\left(-qk(n-k) + c(\epsilon,\delta)\log k\left(1 + \log\left(\frac{qk(n-k)}{c(\epsilon,\delta)\log k}\right)\right)\right)\\
    &\le\exp\left(\log k\left( c(\epsilon,\delta) +  c(\epsilon,\delta)\log\left(\frac{qk(n-k)}{c(\epsilon,\delta)\log k}\right)  \right)\right)\\
    &\le\exp(-\epsilon\log k/3).
\end{split}
\end{equation}
Combining \eqref{eq:external_cut_prob1} and \eqref{eq:external_cut_prob2} gives the required result.
\end{proof}

\begin{lemma}[Concentration of degrees]\label{lem:degree}
If $p \ge \frac{(3+\epsilon)}{\delta^2}\frac{\log k}{k-1}$ for some $\epsilon>0$ and $0<\delta\le1$, then with probability at least $1-2k^{-\epsilon/3}$ we have that
\[
    (1-\delta)p(k-1) \le \deg_K(i) \le (1+\delta)p(k-1), \forall i \in K.
\]
Similarly, with probability at least $1-2k^{-\epsilon/3}$ we have that
\[
    (1-\delta)(p(k-1)+q(n-k)) \le \deg_G(i) \le (1+\delta)(p(k-1)+q(n-k)), \forall i \in K.
\]
\end{lemma}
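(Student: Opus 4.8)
The plan is to handle each of the four one-sided tail events by a multiplicative Chernoff bound and then take a single union bound over the two tails and the $k$ nodes of $K$.

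First I would fix a node $i \in K$. Under the model of Definition~\ref{def:data_model}, the edges from $i$ to the other $k-1$ nodes of $K$ are independent $\mathrm{Bernoulli}(p)$ variables, so $\deg_K(i)$ is their sum and $\mathbb{E}[\deg_K(i)] = p(k-1)$. The standard multiplicative Chernoff bound gives, for $0<\delta\le1$,
\[
    \mathbb{P}\!\left(\deg_K(i) \ge (1+\delta)p(k-1)\right) \le \exp\!\left(-\tfrac{\delta^2}{3}p(k-1)\right), \qquad
    \mathbb{P}\!\left(\deg_K(i) \le (1-\delta)p(k-1)\right) \le \exp\!\left(-\tfrac{\delta^2}{2}p(k-1)\right).
\]
Under the hypothesis $p(k-1) \ge \frac{(3+\epsilon)}{\delta^2}\log k$, the exponent $\tfrac{\delta^2}{3}p(k-1)$ is at least $(1+\tfrac{\epsilon}{3})\log k$, so each of the two probabilities is at most $k^{-(1+\epsilon/3)}$. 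A union bound over the two tails and the $k$ choices of $i$ then bounds the overall failure probability by $2k\cdot k^{-(1+\epsilon/3)} = 2k^{-\epsilon/3}$, which is the first claim.

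For the second claim I would repeat the argument with $\deg_G(i)$ in place of $\deg_K(i)$. For fixed $i \in K$, the $k-1$ edges from $i$ into $K$ (each present with probability $p$) and the $n-k$ edges from $i$ into $V\setminus K$ (each present with probability $q$) are all independent, so $\deg_G(i)$ is a sum of $n-1$ independent Bernoulli variables with mean $\nu := p(k-1)+q(n-k)$. The same two Chernoff bounds apply verbatim with $p(k-1)$ replaced by $\nu$, and since $\nu \ge p(k-1) \ge \frac{(3+\epsilon)}{\delta^2}\log k$ the exponent $\tfrac{\delta^2}{3}\nu$ is again at least $(1+\tfrac{\epsilon}{3})\log k$. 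Hence each tail probability is at most $k^{-(1+\epsilon/3)}$, and the same union bound over two tails and $k$ nodes yields the stated $2k^{-\epsilon/3}$.

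There is essentially no obstacle here; the only point worth flagging is bookkeeping of the Chernoff constants --- the upper-tail bound carries the factor $1/3$ rather than $1/2$, which is exactly what forces the hypothesis to read $p \ge \frac{(3+\epsilon)}{\delta^2}\frac{\log k}{k-1}$ (the lower tail, with its more generous $1/2$, is then automatically controlled by the same assumption). It is also worth noting that the second part needs no assumption on $q$: only the inequality $\nu \ge p(k-1)$ is used, so the extra $q$-edges can only help.
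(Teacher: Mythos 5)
Your proof is correct and follows essentially the same approach as the paper: a multiplicative Chernoff bound for each node's degree, followed by a union bound over the nodes of $K$ (the paper folds the two tails into a single two-sided Chernoff bound with constant $1/3$, whereas you keep them separate, but this is only cosmetic). Your remark that the $1/3$ in the upper-tail exponent is what dictates the $\frac{3+\epsilon}{\delta^2}$ threshold, and that the $\deg_G(i)$ bound needs no assumption on $q$ because $\nu \ge p(k-1)$, correctly identifies the only points requiring care.
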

\begin{proof}
For each node $i \in K$, $\deg_K(i)$ is the sum of independent Bernoulli random variables with mean $\mathbb{E}[\deg_K(i)] = p(k-1)$, therefore, apply the multiplicative Chernoff bound we have
\[
    \mathbb{P}(|\deg_K(i) - p(k-1)| \ge \delta p(k-1)) \le 2\exp(-\delta^2p(k-1)/3) \le 2\exp(-(1+\epsilon)\log k/3).
\]
By taking a union bound over all $i \in K$ we obtain the required concentration result for $\deg_K(i)$ for all $i \in K$. The result for $\deg_G(i)$ for all $i \in K$ is obtained similarly.
\end{proof}

\begin{lemma}[Well-connected cluster]\label{lem:connectivity}
If $p \ge \max(\frac{(3+\epsilon_1)}{\delta_1^2}\frac{\log k}{k-1}, \frac{(2+\epsilon_2)}{\delta_2\sqrt{1-\delta_1}}\frac{\sqrt{\log k}}{\sqrt{k-1}})$, then with probability at least $1-2k^{-\epsilon_1/3}-k^{-2\epsilon_2}$ we have that for all $s \in K$, for all $i \in K\backslash\{s\}$, there are at least $(1-\delta_1)(1-\delta_2)p^2(k-1)$ paths connecting node $i$ to node $s$ such that, the path lengths are at most 2 and the paths are mutually non-overlapping, i.e., an edge appears in at most one of the paths.
\end{lemma}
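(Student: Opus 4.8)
I would first replace the statement about edge-disjoint paths by one about common neighbourhoods in $K$. Fix $s\in K$ and $i\in K\setminus\{s\}$, and let $N_K(i)\cap N_K(s)$ be the set of nodes of $K$ adjacent to both $i$ and $s$. For each $v\in N_K(i)\cap N_K(s)$ the two edges $iv$ and $vs$ form a path of length $2$ from $i$ to $s$; since such a $v$ satisfies $v\ne i$ and $v\ne s$ (no self-loops), for two distinct common neighbours $v\ne v'$ the edge sets $\{iv,vs\}$ and $\{iv',v's\}$ are disjoint, so the corresponding paths are edge-disjoint, and if in addition $i\sim s$ the single edge $is$ is one more path (of length $1$) edge-disjoint from all of these. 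Hence the number of mutually non-overlapping paths of length at most $2$ between $i$ and $s$ is at least $|N_K(i)\cap N_K(s)|$, and it suffices to show that, with the claimed probability, $|N_K(i)\cap N_K(s)|\ge(1-\delta_1)(1-\delta_2)p^2(k-1)$ holds simultaneously for every ordered pair $(s,i)$ with $s\ne i$.

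\textbf{Degree concentration plus a conditional Chernoff bound.} Apply Lemma~\ref{lem:degree} with parameters $\delta_1,\epsilon_1$ — its hypothesis is precisely the first term of the assumed lower bound on $p$ — to get, with probability at least $1-2k^{-\epsilon_1/3}$, that $\deg_K(s)\ge(1-\delta_1)p(k-1)$ for every $s\in K$. Now fix an ordered pair $(s,i)$ and expose only the edges of $K$ incident to $s$, i.e.\ the set $N_K(s)$; on the degree event $|N_K(s)\setminus\{i\}|\ge(1-\delta_1)p(k-1)-1$. Conditionally on $N_K(s)$, the quantity $|N_K(i)\cap N_K(s)|$ equals the number of $v\in N_K(s)\setminus\{i\}$ with $v\sim i$, and each such edge $iv$ is present independently with probability $p$ and is independent of everything already revealed (it is not incident to $s$); thus $|N_K(i)\cap N_K(s)|$ conditionally stochastically dominates $\mathrm{Bin}(m,p)$ with $m=\deg_K(s)-1\ge(1-\delta_1)p(k-1)-1$. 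A multiplicative Chernoff lower-tail bound with deviation $\delta_2$ then yields, on the degree event,
\[
\mathbb{P}\!\left(|N_K(i)\cap N_K(s)|<(1-\delta_1)(1-\delta_2)p^2(k-1)\mid N_K(s)\right)\le\exp\!\left(-\tfrac{\delta_2^2}{2}(1-\delta_1)p^2(k-1)+O(p)\right),
\]
where the $O(p)$ correction absorbs the ``$-1$'' in the number of trials. Taking a union bound over the at most $k^2$ ordered pairs and using the second term of the assumed lower bound, $p\ge\frac{2+\epsilon_2}{\delta_2\sqrt{1-\delta_1}}\sqrt{\log k/(k-1)}$, makes the total failure probability at most $k^{-2\epsilon_2}$; combining with the degree event gives the stated $1-2k^{-\epsilon_1/3}-k^{-2\epsilon_2}$. (One can also bypass Lemma~\ref{lem:degree}: $|N_K(i)\cap N_K(s)|=\sum_{v\in K\setminus\{i,s\}}\mathbf{1}[v\sim i]\mathbf{1}[v\sim s]$ is a sum of independent $\mathrm{Bernoulli}(p^2)$ variables — distinct $v$'s use disjoint edge pairs — and Chernoff with the single deviation $\delta_0=\delta_1+\delta_2-\delta_1\delta_2$, for which $1-\delta_0=(1-\delta_1)(1-\delta_2)$ and $\delta_0\ge\delta_2\sqrt{1-\delta_1}$, reproduces the conclusion directly; the two arguments differ only in bookkeeping.)

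\textbf{Main obstacle.} The combinatorial reduction and the per-pair Chernoff estimate are routine; the delicate parts are the bookkeeping. The first is the exposure argument itself: one must condition so that revealing $N_K(s)$ leaves all edges incident to $i$ except possibly $is$ untouched, which is exactly why the conditional law of $|N_K(i)\cap N_K(s)|$ is a Binomial over $N_K(s)\setminus\{i\}$ and why the harmless ``$-1$'' appears (a careless coupling of the $s$- and $i$-incident edges would break independence). The second, and the point I expect to need the most care, is the constant chasing in the final union bound: one must verify that the specific threshold $p\ge\frac{2+\epsilon_2}{\delta_2\sqrt{1-\delta_1}}\sqrt{\log k/(k-1)}$ is tuned so that after the $k^2$-fold union the exponent exceeds $(2+2\epsilon_2)\log k$, which relies on the numeric inequalities $(2+\epsilon_2)^2\ge4(1+\epsilon_2)$ and $1-(1-\delta_1)(1-\delta_2)\ge\delta_2\sqrt{1-\delta_1}$, and on absorbing the $O(p)$ term and the $k-2$-versus-$k-1$ slack — legitimate because $p\le1$ and the lower bound on $p$ together force $k-1\ge4\log k$, so these corrections are lower order once $\epsilon_1,\epsilon_2,\delta_1,\delta_2$ are regarded as fixed constants.
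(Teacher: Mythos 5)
Your proposal is correct and follows essentially the same route as the paper's proof: fix $s$, let $F=N_K(s)$ be the neighbourhood of $s$ inside $K$, control $|F|$ via Lemma~\ref{lem:degree}, then condition on $F$ and apply a multiplicative Chernoff lower tail to $|E(\{i\},F\setminus\{i\})|$ (your $|N_K(i)\cap N_K(s)|$), handling the $i\in F$ case by counting the length-$1$ path $(i,s)$ as the extra edge-disjoint path that compensates the loss of one trial, and finish with a $k^2$ union bound. The parenthetical alternative you sketch --- viewing $|N_K(i)\cap N_K(s)|=\sum_{v\in K\setminus\{i,s\}}\mathbf{1}[v\sim i]\mathbf{1}[v\sim s]$ as a sum of i.i.d.\ $\mathrm{Bernoulli}(p^2)$ indicators and applying Chernoff with a single deviation $\delta_0=1-(1-\delta_1)(1-\delta_2)\ge\delta_2\sqrt{1-\delta_1}$ --- is a genuine, slightly cleaner variant: it bypasses Lemma~\ref{lem:degree} entirely and would in fact yield the conclusion with the sharper probability $1-k^{-2\epsilon_2}$ at the cost of a $(k-2)$-versus-$(k-1)$ bookkeeping correction that the length-$1$ path again absorbs.
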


\begin{proof}
Let $s \in K$ and denote $F$ the set of neighbors of $s$ in $K$. By Lemma~\ref{lem:degree} and our assumption on $p$ we know that $|F| \ge (1-\delta_1)p(k-1)$ with probability at least $1-2k^{-\epsilon_1/3}$. Let us denote $E(A,B)$ the set of edges between $A \subseteq K$ and $B \subseteq K$. Let $i \in K\backslash\{s\}$. If $i \not\in F$, then $|E(\{i\}, F)|$ is the sum of independent Bernoulli random variables with mean $\mathbb{E}[|E(\{i\}, F)|] = |F|p$. Apply the multiplicative Chernoff bound we get that
\[
    \mathbb{P}(|E(\{i\}, F)| \le (1-\delta_2)|F|p) \le \exp\left(-\frac{\delta_2^2}{2}|F|p\right) \le \exp\left(-\frac{\delta_2^2(1-\delta_1)}{2}p^2(k-1)\right) \le \exp(-(2+2\epsilon_2)\log k)
\]
where the last inequality is due to our assumption that $p \ge \frac{(2+\epsilon_2)}{\delta_2\sqrt{1-\delta_1}}\frac{\sqrt{\log k}}{\sqrt{k-1}}$. If $i \in F$, then the edge $(i,s)$ is a path of length 1 between node $i$ and node $s$, moreover,
\[
    \mathbb{P}(|E(\{i\}, F\backslash\{i\})| + 1 \le (1-\delta_2)|F|p) \le \mathbb{P}(|E(i', F)| \le (1-\delta_2)|F|p) 
\]
for any node $i' \in K\backslash F$ and $i'\neq s$. Note that, for $i \in K\backslash\{s\}$, each edge $(i,j)$ in $E(\{i\}, F\backslash\{i\})$ identifies a unique path $(i,j,s)$ and all these paths do not have overlapping edges. Therefore, denote $P(i,s)$ the set of mutually non-overlapping paths of length at most 2 between $i$ and $s$. and take union bounds over all $i \in K\backslash\{s\}$ and then over all $s \in K$, we get that
\[
    \mathbb{P}(P(i,s) \le (1-\delta_2)|F|p, \forall s \in K, \forall i \in K\backslash\{s\}) \le k^{-2\epsilon_2}.
\]
Finally, a union bound over the above event and the event that $|F| \le (1-\delta_1)p(k-1)$ gives the required result.
\end{proof}

\section{Dataset details, empirical setup and additional results}\label{sec:additional_experiments}

The co-authorship graphs are based on the Microsoft Academic Graph from the KDD Cup 2016 challenge~\cite{SMBG18}. In these graphs, nodes are authors, and two nodes are connected by an edge if they have coauthored a paper. The clusters are defined according to the most active research field of each author. The node attributes represent paper keywords for each author's papers. The first graph consists of 18,333 computer science researchers and 81,894 connections among them. Each computer science researcher belongs to one of the 15 ground-truth clusters. The node attributes consists of 6,805 key words. The second graph consists of 34,493 physics researchers and 247,962 connections among them. Each physics researcher belongs to one of the 5 ground-truth clusters. The node attributes consists of 8,415 key words. The cluster sizes are given in Table~\ref{tab:real-coauthor-stats}.

\begin{table}[h!]
\caption{Cluster statistics in co-authorship graphs}
\label{tab:real-coauthor-stats}
  \centering
  \setlength\tabcolsep{2.5pt}
  \begin{tabular}{clrr}
    \toprule
    Network & Cluster & Number of nodes & Volume \\
    \midrule
    \multirow{16}{*}{\rotatebox[origin=c]{90}{Computer Science}} 
    & Bioinformatics & 708 & 3767 \\
    & Machine Learning & 462 & 4387\\
    & Computer Vision &  2050 & 20384\\
    & NLP & 429 & 2476\\
    & Graphics & 1394 & 15429\\
    & Networks & 2193 & 18364\\
    & Security & 371 & 2493\\
    & Databases & 924 &9954\\
    & Data Mining & 775 & 7573\\
    & Game Theory & 118 & 362 \\
    & HCI & 1444 & 15145\\
    & Information Theory & 2033 & 16007 \\
    & Medical Informatics &  420 & 3838\\
    & Robotics & 4136 & 33708\\
    & Theoretical CS &  876 & 9901\\
     \cmidrule(l{2pt}r{2pt}){2-4}
    & TOTAL & 18333 & 163788 \\
    \midrule
    \multirow{6}{*}{\rotatebox[origin=c]{90}{Physics}} 
    & Phys. Rev. A & 5750 & 52151\\
    & Phys. Rev. B & 5045 & 54853\\
    & Phys. Rev. C & 17426 & 325475\\
    & Phys. Rev. D & 2753 & 40451\\
    & Phys. Rev. E & 3519 & 22994\\
    \cmidrule(l{2pt}r{2pt}){2-4}
    & TOTAL & 34493& 495924\\
    \bottomrule
  \end{tabular}
\end{table}

For both datasets, we preprocess the node attributes by applying PCA to reduce the dimension to 128. In addition, for each node we enhance its attributes by taking a uniform average over its own attributes and the neighbors' attributes. Uniform averaging of neighborhood attributes has been shown to improve the signal-to-noise ratio in CSBM~\cite{BFJ2021}. This operation does not break the local nature of Algorithm~\ref{alg:lgc} because it only needs to be done whenever it becomes necessary for subsequent computations, i.e., when a node is looked at by Algorithm~\ref{alg:lgc}.

We consider two ways for setting the sink capacities. The first is $T_i = \deg_G(i)$ for all $i$. The corresponding local clustering results are reported in Table~\ref{tab:real-coauthor-results} in the main text. The second is $T_i = 1$ for all $i$. The additional results are presented in Table~\ref{tab:real-coauthor-results-complete}. For each cluster $K$ in a graph, given a seed node $s \in K$, we run Algorithm~\ref{alg:lgc} with source mass $\Delta_s = \alpha\sum_{i \in K}T_i$ for $\alpha \in \{1.5,1.75,2,\ldots,5\}$. We select the cluster that has the minimum edge-weighted conductance. Given edge weights $w_{ij}$ for $(i,j) \in E$ and a cluster $C \subseteq V$, the edge-weighted conductance of $C$ is the ratio
\[
	\frac{\sum_{i \in C, j \not\in C} w_{ij}}{\sum_{i \in C}\sum_{j \sim i}w_{ij}}.
\]
We measure recovery quality using the F1 score. For each cluster we run 100 trials, for each trial we randomly select a seed node from the target cluster. We report average F1 scores over 100 trials. We set $\gamma = 0.02$ so that the edge weights are reasonably distributed between 0 and 1, that is, not all edges weights are arbitrarily close to 1, and not all edge weights are arbitrarily close 0. We find that the results do not change much when we use other choices for $\gamma$ within a reasonable range, e.g. $\gamma \in [0.005,0.1]$. For both choices of $T$, using node attributes generally improves the recovery accuracy. Overall, setting the sink capacities to $T_i = \deg_G(i)$ leads to higher F1 scores than setting $T_i = 1$.

\begin{table}[h!]
\caption{F1 scores for local clustering in co-authorship networks under different settings of flow diffusion}
\label{tab:real-coauthor-results-complete}
  \centering
  \begin{tabular}{clrrrrrr}
    \toprule
    & & \multicolumn{3}{c}{$T_i=\deg_G(i)$ for all $i$} & \multicolumn{3}{c}{$T_i = 1$ for all $i$} \\
    \cmidrule(l{2pt}r{2pt}){3-5} \cmidrule(l{2pt}r{2pt}){6-8}
    Network & Cluster & No attr. & Ues attr. & Improv. & No attr. & Ues attr. & Improv.\\
    \midrule
    \multirow{15}{*}{\rotatebox[origin=c]{90}{Computer Science}} 
    & Bioinformatics & 32.1 & 39.3 & 7.2 & 23.5 & 31.7 & 8.2\\
    & Machine Learning & 30.9 & 37.3 & 6.4 & 27.5 & 34.4 & 6.9\\
    & Computer Vision & 37.6 & 35.5 & -2.1 & 40.4 & 37.8 & -2.6 \\
    & NLP & 45.2 & 52.3 & 7.1 & 34.3 & 37.2 & 2.9\\
    & Graphics & 38.6 & 49.2 & 10.6 & 39.1 & 41.3 & 2.2\\
    & Networks & 44.1 & 47.0  & 2.9 & 43.0 & 44.1 & 1.1\\
    & Security & 29.9 & 35.7 & 5.8 & 23.0 & 26.2 & 3.2\\
    & Databases & 48.5 &  58.1 & 9.6 & 41.9 & 42.6 & 0.7\\
    & Data Mining & 27.5 & 28.8 & 1.3 & 26.2 & 28.6 & 2.4\\
    & Game Theory & 60.6 & 66.0 & 5.4 & 56.9 & 62.6 & 5.7\\
    & HCI & 70.0 & 77.6  & 7.6 & 44.0 & 63.1 & 19.1\\
    & Information Theory & 47.4 & 46.9 & -0.5 & 41.6 & 41.4 & -0.2\\
    & Medical Informatics & 65.7 & 70.3& 4.6 & 62.7 & 68.1 & 5.4\\
    & Robotics & 59.9 & 59.9& 0.0 & 58.8 & 55.9  & -2.9\\
    & Theoretical CS & 66.3 & 70.7 & 4.4 & 54.9 & 59.1 & 4.2\\
    \midrule
    \multirow{5}{*}{\rotatebox[origin=c]{90}{Physics}} 
    & Phys. Rev. A & 69.4 & 70.9 & 1.5 & 53.5 & 60.9 & 7.4\\
    & Phys. Rev. B & 41.4 & 42.3 & 0.9 & 40.4 & 41.1 & 0.7\\
    & Phys. Rev. C & 79.3 & 82.1 & 2.8 & 84.9 & 85.9 & 1.0\\
    & Phys. Rev. D & 62.3 & 68.9 & 6.6 & 63.6 & 70.0 & 6.4\\
    & Phys. Rev. E & 49.5 & 53.7 & 4.2 & 30.1 & 34.9 & 4.8\\
    \midrule
    \multicolumn{2}{c}{AVERAGE} & 50.3  & 54.6 & 4.3 & 44.5 & 48.3 & 3.8\\
    \bottomrule
  \end{tabular}
\end{table}

\subsection{Additional experiments on the Amazon co-purchase graph}\label{sec:amazon-results}

We carry out additional experiments using a segment of the Amazon co-purchase graph~\cite{mcauley2015image,SMBG18}. In this graph, nodes represent products, and two products are connected by an edge if they are frequently bought together. The clusters are defined according to the product category. The node attributes are bag-of-words encoded product reviews. The cluster sizes are given in Table~\ref{tab:real-amazon-stats}. We use exactly the same empirical settings as before. The local clustering results are reported in Table~\ref{tab:real-amazon-results}.

\begin{table}[h!]
\caption{Cluster statistics in the Amazon co-purchase graph}
\label{tab:real-amazon-stats}
  \centering
  \setlength\tabcolsep{2.5pt}
  \begin{tabular}{lrr}
    \toprule
    Cluster & Number of nodes & Volume \\
    \midrule
    Film Photography & 365 & 13383\\
    Digital Cameras & 1634 & 32208\\
    Binoculars \& Scopes & 686 & 21611\\
    Lenses & 901 & 26479\\
    Tripods \& Monopods & 872 & 26133\\
    Video Surveillance & 798 & 17959\\
    Lighting \& Studio & 1900 & 86989\\
    Flashes & 331 & 13324\\
    \cmidrule(l{2pt}r{2pt}){1-3}
    TOTAL & 7487 & 238086 \\
    \bottomrule
  \end{tabular}
\end{table}

\begin{table}[h!]
\caption{F1 scores for local clustering in a segment of the Amazon co-purchase graph}
\label{tab:real-amazon-results}
  \centering
  \begin{tabular}{lrrrrrr}
    \toprule
    & \multicolumn{3}{c}{$T_i=\deg_G(i)$ for all $i$} & \multicolumn{3}{c}{$T_i = 1$ for all $i$} \\
    \cmidrule(l{2pt}r{2pt}){2-4} \cmidrule(l{2pt}r{2pt}){5-7}
    Cluster & No attr. & Ues attr. & Improv. & No attr. & Ues attr. & Improv.\\
    \midrule
    Film Photography & 69.0 & 71.9 & 2.9 & 70.4 & 74.0 & 3.6\\
    Digital Cameras & 54.4 & 56.0 & 1.6 & 42.7 & 43.1 & 0.4\\
    Binoculars & 83.3 & 85.1 & 1.8 & 81.8 & 82.7 & 0.9\\
    Lenses & 39.0 & 40.4 & 1.4 & 32.2 & 32.9 & 0.7\\
    Tripods \& Monopods & 46.3 & 47.8 & 1.5 & 37.9 & 38.1 & 0.2\\
    Video Surveillance & 94.7 & 94.9 & 0.2 & 94.0 & 93.8 & -0.2\\
    Lighting \& Studio & 49.6 & 49.5 & -0.1 & 53.7 & 53.5 & -0.2\\
    Flashes & 33.3 & 32.7 & -0.6 & 27.0 &  25.8 &  -1.2\\
    \midrule
    AVERAGE & 58.7 & 59.8 & 1.1 & 55.0 & 55.5 & 0.5\\
    \bottomrule
  \end{tabular}
\end{table}

We estimate an average signal-to-noise ratio in each dataset as follows. Let $K_1,K_2,\ldots,K_C$ denote a partition of nodes into distinct clusters. Let $X_i$ be the node attributes of node $i$. For $1 \le r \le C$ let
\[
	\bar\mu_r := \frac{1}{|K_r|}\sum_{i \in K_r} X_i
\]
be the empirical mean of node attributes in the cluster $K_r$. Denote
\[
	\bar\lambda_r := \min_{1\le s \le C, s \neq r}\|\bar\mu_r - \bar\mu_s \|_2
\]
the empirical minimum pairwise mean distance between cluster $K_r$ and other clusters. Let $\bar\sigma_{\ell}$ denote the empirical standard deviation for the $\ell$th attribute and let $\bar\sigma = \frac{1}{d}\sum_{\ell=1}^d \bar\sigma_{\ell}$, where $d$ is the dimension of node attributes. Then we compute an average relative signal strength for the entire dataset as
\[
	\mbox{ratio} := \frac{1}{|C|}\sum_{r = 1}^C\bar\lambda_r/\bar\sigma.
\]
The computed results are shown in Table~\ref{tab:snr}. Observe that the ratio is much smaller for the Amazon co-purchase graph than the two co-authorships graphs. This means that the relative strength of attribute signal is much smaller for the Amazon co-purchase graph, and it explains why there is only a very small improvement when using node attributes. 

\begin{table}[h!]
\caption{Relative signal strength for each dataset}
\label{tab:snr}
  \centering
  \begin{tabular}{lr}
    \toprule
    graph & ratio\\
    \midrule
    Co-authorship (Computer Science) & 41.69\\
    Co-authorship (Physics) & 77.09\\
    Amazon co-purchase & 7.58 \\
    \bottomrule
  \end{tabular}
\end{table}

The results we observe in the experiments with real-world datasets indicate that, an very interesting future work is to incorporate node embedding and parameter learning into the local flow diffusion pipeline (to improve signal-to-noise ratio of node attributes), where the attributes and their relative importance may be optimized simultaneously alongside the local diffusion process.

\subsection{Additional experiments on a large online social network}

Since our algorithm is sublinear, we carry out additional experiments using the Orkut online social network, which consists of more than 3 million nodes and 117 million edges. This network has been used by \cite{FWY20} to evaluate their local graph clustering algorithm. The network comes with 5000 ground-truth communities, from which we selected 11 target clusters according to size, combinatorial conductance and internal connectivity. A summary of the selected clusters is provided in Table~\ref{tab:orkut_clusters}. 

\begin{table}[h!]
\caption{Summary of clusters selected from the Orkut online social network}
\label{tab:orkut_clusters}
  \centering
  \begin{tabular}{crrr}
    \toprule
    Cluster & Number of nodes & Volume & Conductance \\
    \midrule
    A 	& 368 & 49767 	& 0.42 \\
    B 	& 202 & 31912 	& 0.45 \\
    C 	& 141 & 16022 	& 0.45 \\
    D 	& 113 & 11698 	& 0.46 \\
    E	& 194 & 26248 	& 0.47 \\							    
    F 	& 64  & 4617	& 0.47 \\	
    G 	& 128 & 13786 	& 0.47 \\
    H	& 107 & 14109 	& 0.48 \\
    I 	& 195 & 18652 	& 0.49 \\
    J 	& 318 & 41612 	& 0.50 \\
    K 	& 223 & 20204 	& 0.50 \\
    \bottomrule
  \end{tabular}
\end{table}

The original dataset does not contain node attributes. Therefore, we conduct semi-synthetic experiments as follows. For each target cluster, we generate 10-dimensional node attributes from a mixture of Gaussians, i.e., we use the same attribute generation process as the one used in the synthetic experiments in the main paper. For each target cluster we run multiple trials, for each trial we use a different node from the target cluster as the seed node. The number of trials we run for each target cluster equals the number of nodes in the cluster. In order to demonstrate the effect of node attributes, we control the strength of node attributes by varying a parameter $a$ where $\hat\mu = a\sqrt{\log n}$ and $n$ is the total number of nodes in the graph. This is the same setting that has been used to generate Figure 2 in the main paper. For each target cluster, we report the average F1 scores in Table~\ref{tab:orkut}, where FD means flow diffusion that does not use node attributes, WFD($a$=x) means weighted flow diffusion with node attribute strength $a$=x. Not surprisingly, stronger node attributes lead to higher accuracy. All our experiments are run on a personal laptop with 32GB memory. With distributed computing systems the algorithm easily scales to much larger datasets.

\begin{table}[h!]
\caption{F1 scores for local clustering in the Orkut online social network}
\label{tab:orkut}
  \centering
  \begin{tabular}{crrrr}
    \toprule
    & \multicolumn{4}{c}{method and attribute setting}\\
    \cmidrule(l{2pt}r{2pt}){2-5}
    Cluster & FD & WFD($a$=1) & WFD($a$=1.5) & WFD($a$=2)\\
    \midrule
    A & 53.8 & 68.3 & 83.9 & 95.6\\
    B & 71.1 & 77.0 & 82.8 & 97.5\\
    C & 63.3 & 70.3 & 78.3 & 92.9\\
    D & 73.4 & 86.0 & 95.7 & 98.9\\
    E & 61.5 & 77.6 & 87.0 & 90.0\\
    F & 79.1 & 89.4 & 95.7 & 97.8\\
    G & 71.7 & 82.3 & 90.0 & 94.7\\
    H & 68.4 & 79.8 & 87.3 & 94.7\\
    I & 60.1 & 70.4 & 82.4 & 93.7\\
    J & 51.6 & 64.8 & 80.6 & 93.8\\
    K & 54.2 & 66.8 & 80.5 & 91.4\\
    \bottomrule
  \end{tabular}
\end{table}

In these experiments, we set the sink capacities to $T_i = \deg_G(i)$ for all $i$. For the source mass we set $\alpha=5$ and hence $\Delta_s = 5\vol_G(K)$ where $K$ is the target cluster and $s$ is the seed node. We set $\gamma=0.04$ in the Gaussian kernel and we run diffusion for $\tau=30$ iterations. We did not fine tune any of these parameters. In our experiments we find that other choices of parameters lead to similar results. Following the empirical setting of \cite{FWY20}, we apply the sweepcut procedure on $x^{\tau}$ to obtain the final output cluster.

\subsection{Additional experiments on synthetic data with comparisons to global baselines}

We carry out additional experiments on synthetic data to compare with global baseline methods. These methods require processing the whole graph and hence they do not have a local running time. Because of that, we use the stochastic block model to generate a smaller graph on $n=1000$ nodes, and two clusters, each cluster consists of 500 nodes. The target cluster has intra-cluster edge probability $p = 0.03$. The other cluster has intra-cluster edge probability $p' = 0.01$. The inter-cluster edge probability is $q = 0.01$. This is also known as the planted clustering model with $r = 1$ target cluster. We generate node attributes from a mixture of Gaussians in the same way as we did in the main paper (cf.~Section~\ref{sec:synthetic}).

We compared with the following 4 baseline methods:
\begin{enumerate}
  \item Spectral partitioning using the second eigenvector of normalized graph Laplacian (SC-graph). {\em This uses only the graph.}
  \item Spectral clustering using only the node attributes and the Gaussian kernel (SC-attribute). {\em This uses only the attributes.}
  \item Spectral clustering using the weighted graph whose edge weights come from the Gaussian kernel (SC-attribute-graph). {\em This uses both the graph and the attributes.}
  \item Bayes' optimal classifier for node attributes (Bayes-attribute). {\em This uses only the attributes.}
\end{enumerate}
Note that the Bayes' optimal classifier additionally requires knowing the true means of the Gaussians. For that method we assume that the true means are known. We use the Bayes' optimal classifier to demonstrate the level of separability of the node attributes. The Bayes' optimal classier is the separator that achieves the lowest expected clustering error. We vary the attribute strength from $a = 0,0.5,1,\ldots,5$ where $\hat\mu = a\sqrt{\log n}$. This is the same setting that has been used to generate Figure 2 in the main paper. We report the F1 scores in Table~\ref{tab:global_baseline}, where FD represents flow diffusion that does not use the node attributes, and WFD represents weighted flow diffusion that uses the node attributes.

\begin{table}[h!]
\caption{F1 scores for local clustering in the CSBM and comparisons with global baselines}
\label{tab:global_baseline}
  \centering
  \begin{tabular}{lrrrrrrrrrrr}
    \toprule
    & \multicolumn{11}{c}{attribute strength ($a$)}\\
    \cmidrule(l{2pt}r{2pt}){2-12}
    Method & 0.0 & 0.5 & 1.0 & 1.5 & 2.0 & 2.5 & 3.0 & 3.5 & 4.0 & 4.5 & 5.0\\
    \midrule
    SC-graph & 50.5 & 50.5 & 50.5 & 50.5 & 50.5 & 50.5 & 50.5 & 50.5 & 50.5 & 50.5 & 50.5\\
    SC-attribute & 50.1 & 57.3 & 86.6 & 97.0 & 100.0 & 100.0 & 100.0 & 100.0 & 100.0 & 100.0 & 100.0\\
    SC-attribute-graph & 62.3 & 62.9 & 63.8 & 68.0 & 75.7 & 83.1 & 90.8 & 96.2 & 98.6 & 99.4 & 99.5\\
    Bayes-attribute & 50.0 & 62.3 & 85.3 & 97.1 & 100.0 & 100.0 & 100.0 & 100.0 & 100.0 & 100.0 & 100.0\\
    FD & 78.2 & 78.2 & 78.2 & 78.2 & 78.2 & 78.2 & 78.2 & 78.2 & 78.2 & 78.2 & 78.2\\
    WFD & 77.0 & 77.3 & 78.4 & 80.0 & 82.3 & 85.0 & 87.9 & 90.9 & 93.9 & 96.5 & 98.2\\
    \bottomrule
  \end{tabular}
\end{table}

We make the following observations:
\begin{itemize}
  \item The graph-only spectral partitioning (SC-graph) has the lowest F1, because it tends to find low conductance clusters. In this particular setting, low conductance does not translate to a good recovery result. On the contrary, graph-only flow diffusion (FD) has better performance because it emphasizes more on the local region around the seed node.
  \item In the low signal regime, i.e., when $a$ is small, attribute-based methods (SC-attribute and Bayes-attribute) have really bad performance, while WFD does not seem to be affected too much, thanks to the stronger local graph structure, i.e. $p = 0.03 > 0.01 = q$, and WFD is able to exploit the graph structure.
  \item In the high signal regime, i.e., when $a$ is high, the node attributes are sufficiently informative, and hence attribute-based methods have better performance. WFD starts to outperform its graph-only counterpart FD.
  \item The accuracy improvement of WFD is slower than that of attribute-based methods, because WFD needs to overcome the noise from the graph ($q = 0.01 > 0$).
  \item Among methods that use both the graph and the node attributes, WFD outperforms SC-attribute-graph in the low signal regime and has similar performance in the high signal regime.
\end{itemize}
Of course, both FD and WFD use additional information, such as the size of the target cluster, to set the initial source mass, but at the same time they are local methods and hence are scalable to much larger graphs. For WFD, we use the same $\gamma=(\log^{-3/2}n)/4$ as before. For both FD and WFD, we set the sink capacities to $T_i = 1$ for all $i$. Let $k=500$ be the size of the target cluster, we set the initial source mass on the seed node $s$ to $\Delta_s = \alpha k$, and we vary $\alpha\in[1.1,1.6]$ with 0.05 increments. We report the average (i.e. average over multiple trials, and for each trail we use a different seed node) of the best F1 scores among different choices for $\alpha$.

\end{document}